\newtheorem{theorem}{Theorem}
\newtheorem{assumption}{Assumption}
\newtheorem{property}{Property}
\newtheorem{proof}{Proof}
\newtheorem{corollary}{Corollary}
\newtheorem{remark}{Remark}
\begin{document}
\begin{frontmatter}
\title{MAKO: Meta-Adaptive Koopman Operators for Learning-based Model Predictive Control of Parametrically Uncertain Nonlinear Systems}
\thanks{{This research is supported by the National Research Foundation, Singapore, and PUB, Singapore’s
National Water Agency under its RIE2025 Urban Solutions and Sustainability
(USS) (Water) Centre of Excellence (CoE) Programme, awarded to Nanyang
Environment \& Water Research Institute (NEWRI), Nanyang Technological
University, Singapore (NTU). This research is also supported by the Ministry of Education, Singapore, under its Academic Research Fund Tier 1 (RG63/22). Any opinions, findings and conclusions or recommendations expressed in this material are those of the author(s) and do not reflect the views of the National Research Foundation, Singapore, and PUB, Singapore’s National Water Agency.}}
\thanks{Corresponding author: X. Yin. Email: xunyuan.yin@ntu.edu.sg.}
\author[newri,cceb]{Minghao Han},  
\author[eth]{Kiwan Wong},  
\author[newri,nus]{Adrian Wing-Keung Law},
\author[newri,cceb]{Xunyuan Yin}

\address[newri]{Nanyang Environment and Water Research Institute (NEWRI), Nanyang Technological University, Singapore}
\address[cceb]{School of Chemistry, Chemical Engineering and Biotechnology, Nanyang Technological University, Singapore}
\address[eth]{Soft Robotics Lab, ETH Zurich, Switzerland}
\address[nus]{Department of Civil and Environmental Engineering, National University of Singapore, Singapore}




\begin{abstract}
In this work, we propose a meta-learning-based Koopman modeling and predictive control approach for nonlinear systems with parametric uncertainties. An adaptive deep meta-learning-based modeling approach, called Meta Adaptive Koopman Operator (MAKO), is proposed. Without knowledge of the parametric uncertainty, the proposed MAKO approach can learn a meta-model from a multi-modal dataset and efficiently adapt to new systems with previously unseen parameter settings by using online data. Based on the learned meta Koopman model, a predictive control scheme is developed, and the stability of the closed-loop system is ensured even in the presence of previously unseen parameter settings. Through extensive simulations, our proposed approach demonstrates superior performance in both modeling accuracy and control efficacy as compared to competitive baselines.
\end{abstract}

\begin{keyword}
Meta-learning, Koopman operator, model predictive control, parametrically uncertain nonlinear systems
\end{keyword}

\end{frontmatter}
\section{Introduction}



Parametric uncertainties are common in nonlinear systems, often arising from factors such as variations in payload and operating conditions \cite{hong2006adaptive,tao2014multivariable}. The presence of these uncertainties can cause performance degradation and instability and pose great challenges to the design of control systems. The endeavor to ensure desired control system tracking performance and system stability has driven the advancement of adaptive control methods for nonlinear systems with parametric uncertainties over recent decades.


Model predictive control (MPC) is a popular advanced control technique \cite{mayne2000constrained}. It optimizes the future predicted behavior of the system by utilizing a dynamic model along with current measurement data.
Adaptive model predictive control (AMPC) has been proposed to address uncertainties \cite{fukushima2007adaptive,zhang2020adaptive}; however, the results on nonlinear systems have been limited. In \cite{mayne1993adaptive}, a receding horizon predictive control scheme was developed for nonlinear systems that are subject to control constraints and are linear in the unknown parameters. In \cite{adetola2009adaptive}, robust MPC was integrated with a min-max approach and a Lipschitz-based approach for online parameter update. In \cite{zhang2022self}, {a} self-triggered AMPC method was proposed for constrained discrete-time nonlinear systems facing parametric uncertainties and additive disturbances. More results on nonlinear AMPC can be found in \cite{dehaan2007adaptive,xu2023multi}. Although these results represent significant advancements, first-principle models are typically required as the foundation of control system designs. The theoretical assumption of linear dependency on uncertain parameters further limits its applicability to general nonlinear processes.

Recently, the Koopman operator theory has gained substantial research attention, owing to its capability to represent the dynamics of complex nonlinear processes in a linear manner \cite{koopman1931hamiltonian}. 
Several algorithms have been developed to construct linear Koopman models from process data. These include dynamic mode decomposition (DMD) \cite{schmid2010dynamic}  and extended dynamic mode decomposition (EDMD) \cite{li2017extended}. {DMD represents the observables directly in the original state space, while EDMD employs a predetermined set of basis functions; both methods solve least-squares problems to approximate the linear Koopman operator.} {\cite{zeng2024sampling} proposed a sampling theorem for the exact identification of continuous-time nonlinear dynamical systems using the Koopman operators.}
To streamline the design of the observable functions, researchers have proposed various ML-enabled Koopman modeling and control methods {\cite{morton2018deep,yeung_learning_2019,han2020deep,azencot2020forecasting,shi2022deep,deka2022koopman,han2023robust_tii,li2024machine}}. {In \cite{hao2024deep}, the authors have provided convergence guarantees for the error as the capacity of neural network (NN)  increases, and have derived the error upper bound with connection to the spectral property of the Koopman operator. }
However, these existing approaches have primarily targeted addressing a specific control task with fixed model parameters.

In this work, we aim to exploit the Koopman operator framework and machine learning (ML) to facilitate the modeling and control of parametrically uncertain nonlinear systems. Within the context of ML-based modeling, we acknowledge that uncertainties may be introduced in a beneficial manner to mitigate overfitting (e.g. Monte-Carlo dropout) as well as to aid the assessment of uncertainties of the ML predictions \cite{wei2023probabilistic}. Nonetheless, 
the learning-based modeling and control of parametrically uncertain systems can be conceptualized as a multi-task problem, which can be effectively addressed using the meta-learning concept \cite{hospedales2021meta}. {Meta-learning, often referred to as ``learning to learn'', is a paradigm in ML that focuses on developing algorithms that are capable of generalizing across tasks. Meta-learning aims to extract and utilize knowledge from multiple related tasks, enabling rapid adaptation to new, unseen tasks with minimal data or computational effort.}
{Recently, significant advancements have been achieved in integrating meta-learning with control.
In \cite{molybog2021does}, the optimization landscape of the model-agnostic meta-learning (MAML) algorithm was investigated, with a focus on identifying conditions that guarantee its global convergence in a single task LQR setting. 
\cite{musavi2023convergence} established sufficient conditions for the stability of the dynamical system during optimization and proved that MAML converges to a stationary point in the multi-task LQR setting. 
\cite{toso2024meta} introduced a MAML-based method for solving LQR problems in multi-task, heterogeneous settings, and has provided personalization guarantees for both model-based and model-free learning.}
{In addition}, meta-reinforcement learning (meta RL), a fusion of meta-learning and RL, has been developed for learning-based control in multi-task scenarios. Meta RL controllers utilize previously acquired knowledge and real-time data to adapt to new tasks, wherein the system dynamics, objectives, or distribution of noise and disturbance can vary.
\cite{nagabandi2018learning} introduced a meta-learning-based MPC framework capable of fine-tuning the meta-trained NN model using online data. This method was applied to control a legged robot in the presence of changed payloads, terrains, and even a disabled leg. 
In \cite{mcclement2022meta}, a novel offline meta RL strategy was proposed for tuning proportional-integral controllers in process control systems. 
However, the online adaptation of deep NNs is inefficient and computationally demanding. Furthermore, it is generally challenging for these meta-RL approaches to offer stability and ensure closed-loop performance. 

\begin{figure}
    \centering    \includegraphics[width=0.8\columnwidth]{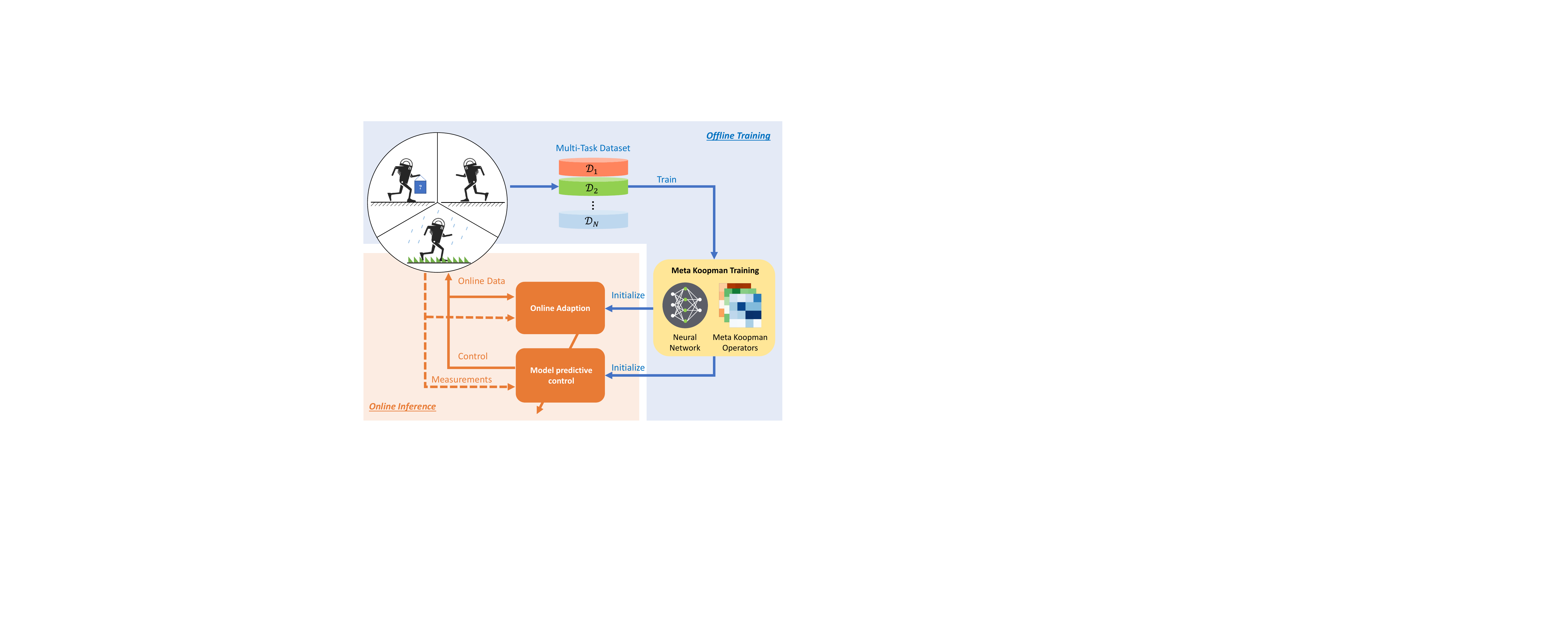}
    \caption{An overview of the Meta Koopman pipeline. 
    }
    \label{fig:overview}
\end{figure}

Based on these observations, we aim to integrate meta-learning with Koopman operator theory to create a learning-based adaptive control framework for parametrically uncertain nonlinear systems. Within the Koopman operator framework, we proposed a meta-adaptive Koopman operator (MAKO) modeling approach. This approach learns from a multi-modal dataset to construct a meta-model for online adaptation. An adaptation scheme is developed to update the meta-model using online data while ensuring convergence. Based on the adaptive meta-model, a predictive control scheme is proposed for the underlying uncertain nonlinear systems. The contributions of this work include: 1) Meta-learning and Koopman operator theory are integrated for the first time to establish a learning-based adaptive MPC framework applicable to a general class of parametrically uncertain nonlinear systems. 2) We rigorously prove the convergence of both the model online adaptation and the closed-loop system. 3) Based on three benchmark systems from various fields, MAKO demonstrates good modeling accuracy and robust tracking control performance in the presence of parameter uncertainties, and it outperforms competitive baselines.

\section{Preliminaries}

\subsection{Meta-learning}

Meta-learning is concerned with developing automatic learning algorithms that can leverage data from previous tasks to quickly adapt to new tasks with trials. The new tasks may differ from the previous tasks in terms of system dynamics, noise distributions, and control objectives \cite{hospedales2021meta}.  In this work, we focus on scenarios where the parameters of the nonlinear system vary in different task settings, that is,
\begin{equation}
    x_{k+1} = f(x_k, u_k, \Theta), \text{ s.t. } \Theta\sim p(\Theta)
    \label{eq: meta nonlinear system}
\end{equation}
where $f$ is an unknown nonlinear function of the state $x_k\in\mathcal{X}\subset\mathbb{R}^n$, the control input $u_{k}\in\mathcal{U}\subset\mathbb{R}^m$, and the system parameter $\Theta {\in\Xi\subset\mathbb{R}^l}$. We use $\mathcal{X}$ and $\mathcal{U}$ to denote the state and input sets, respectively. {We use $\Xi$ to denote the space of parameters.} The parameter of the system, denoted by $\Theta$, follows an unknown distribution $p(\Theta)$, and each instance of $\Theta$ corresponds to a specific task setting. {In the following, we denote a sampled instance of $\Theta$ as $\Theta^i$.}

We formulate the supervised meta-model learning problem as 
\begin{gather}
    \min_{\theta} \mathbb{E}_{p(\Theta^i)} \left [ \mathcal{L}\left(\mathcal{D}_{i}, \theta\right) \right]\\
    \mathcal{L}(\mathcal{D}_{i}, \theta) = \frac{1}{T}\sum_{k=1}^T\left\Vert x^i_{k+1} - {\hat{f}}_\theta\left(x^i_k, u^i_k\right)\right\Vert
\end{gather}
\vspace{-2.1em}

\noindent where ${\hat{f}}_\theta$ is a parameterized model to approximate the unknown dynamic function $f$ in (\ref{eq: meta nonlinear system}), and $\theta$ denotes the parameters to be optimized. We use $\mathcal{D}_i:=\{[x^i_k,u^i_k]^T_{k=1}|\Theta^i\}$ to denote {the sub-dataset of $T$ steps of state-input data  collected under a task setting $\Theta^i$}. The meta-dataset $\mathcal{D}_{\Theta}:=\{\mathcal{D}_i\}_{i=1}^N$ comprises $N$ sub-datasets. In this work, we present a meta-Koopman framework to learn a model that can effectively adapt to new tasks.

\subsection{The Koopman operator}\label{subsec:Koopman}
In this subsection, we briefly introduce the ideas and notations of the Koopman operator theory. The Koopman theory was first formulated in \cite{koopman1931hamiltonian}. 
According to the Koopman theory, a general nonlinear system of the form $x_{k+1} = f(x_k), k\in \mathbb{N}$, can be transformed into a linear system within an infinite-dimensional function space $\mathcal{G}$. 
This space encompasses all square-integrable real-valued functions defined over the compact domain $\mathcal{X}$. The elements of $\mathcal{G}$, denoted as ${\phi}$, are referred to as \textit{observables}. The Koopman operator $\mathcal{K}:\mathcal{G} \rightarrow \mathcal{G}$ satisfies the relation 
$
{\phi}\circ f(x_{k}) = \mathcal{K}{\phi}(x_{k})
$, 
where $\circ$ denotes function composition, and ${\phi}\in \mathcal{G}$ represents the observable function. 
While initially proposed for autonomous nonlinear systems, the concept of the Koopman operator has been extended to controlled systems in recent years \cite{korda2018linear,proctor2018generalizing,iacob2024koopman}.
For controlled systems, the Koopman operator adheres to the following condition: ${\phi}_x\circ f(x_k, u_k) = A {\phi}_x(x_k) + B {\phi}_u(x_k, u_k)
$, 
where $A$ and $B$ are submatrices of the Koopman operator, and ${\phi}_x$ and ${\phi}_u$ represent the observables for the state $x_k$ and control input $u_k$, respectively.
In practical applications, {it is often relevant to find a finite-dimensional numerical approximation of $\mathcal{K}$ within a finite-dimensional function space $\overline{\mathcal{G}}\subset \mathcal{G}$ for the controller design \cite{korda2018linear,zhang2022robust}}. This space is defined by a set of linearly independent observables {$\{\psi|\psi:\mathbb{R}^{n}\rightarrow\mathbb{R}^h\}$}. {To facilitate controller design and analysis, the Koopman operator is typically approximated in a space where the control inputs act linearly \cite{korda2018linear,zhang2022robust}, i.e. ${\phi}_u(x_k, u_k) = u_k$. }

\section{Method}
In this section, we elaborate on the architecture of the proposed meta-adaptive Koopman operator (MAKO) model learning approach and elucidate how it learns and adapts to new tasks. 

\subsection{Meta-trained Koopman model}

\begin{figure}
    \centering
    \includegraphics[width=0.5\textwidth]{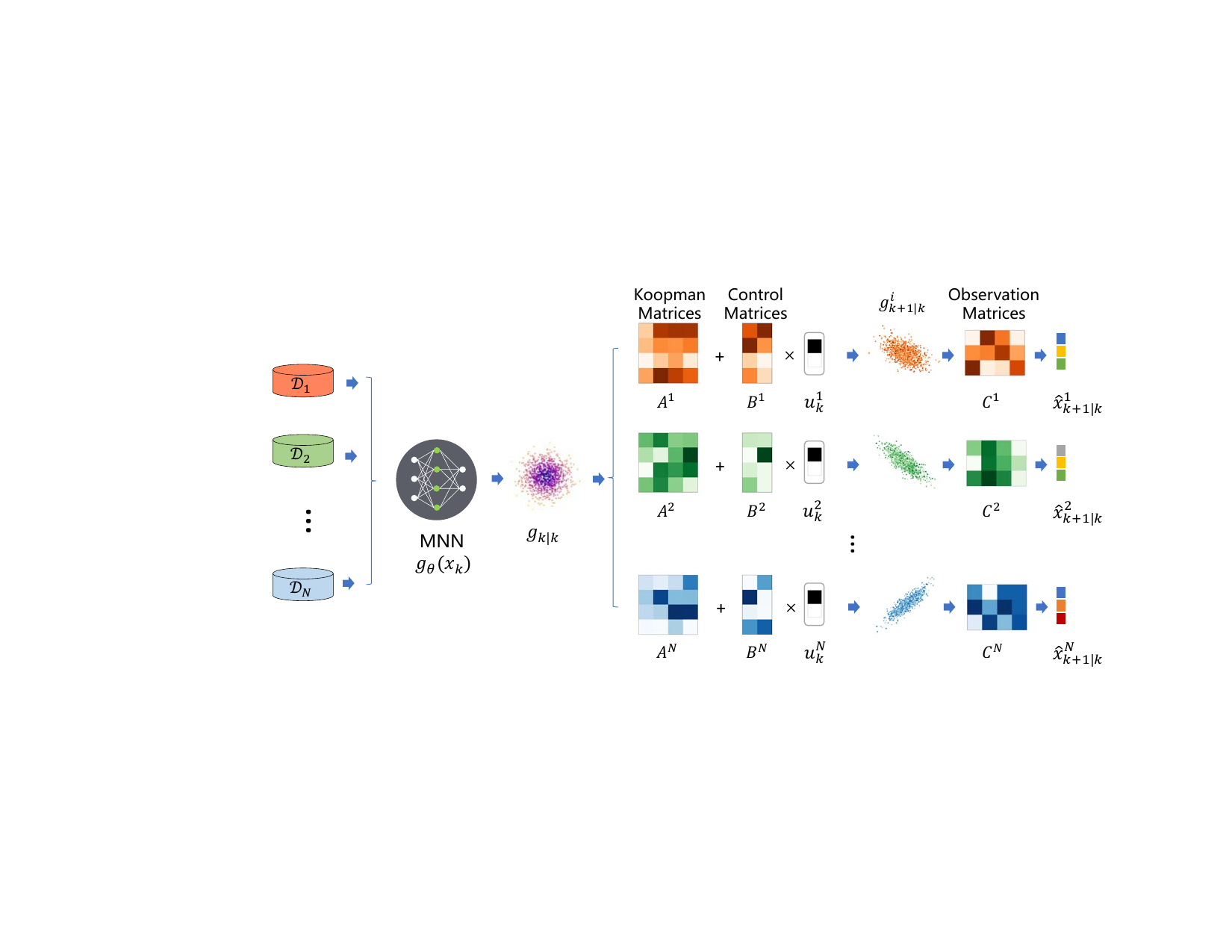}
    \caption{An overview of the proposed MAKO pipeline. The meta-trained neural network encodes a common set of observables. On the observable space, individual Koopman operators are trained for each $\mathcal{D}_i$ to predict future observables. The observables are transformed back to the state space with $C^i$ to predict the future states. }
    \label{fig:pipeline}
\end{figure}

The MAKO model comprises two trainable building blocks: a meta-trained neural network (MNN) responsible for parameterizing the observable functions, and a set of linear Koopman operators for predicting future states and observables across various tasks. 
An overview of our pipeline involving the MNN and the Koopman operators is presented in Fig.~\ref{fig:pipeline}. 

\subsubsection{Meta-trained neural network}

In the proposed framework, the {MNN} plays the key role of encoding an informative observable space, which is shared across different task settings. The key insight lies in recognizing that while the dynamics of the system vary across different tasks, the latent variables that characterize the system dynamics shall remain consistent. Inspired by this understanding, the observables of the system under different task settings are encoded as follows,
\begin{equation}
    g^i_k = {\psi}_\theta\left(x^i_k\right), x^i_k \in \mathcal{D}_i
\end{equation}
where ${\psi}_\theta(\cdot)$ is a multi-layer neural network parameterized by the trainable parameters $\theta$. 
As shown in Fig.~\ref{fig:pipeline}, the sub-datasets collected under different task settings are passed through the same MNN. Consequently, the MNN may not distinguish the task settings and focus on extracting the key dynamic features shared across different settings. 

The encoding mechanism of the MNN, which maps the original state space to a latent observable space, is conceptually similar to the encoder in Koopman autoencoders \cite{azencot2020forecasting,deka2022koopman}. Nonetheless, the MNN in our approach focuses on creating a shared representation across multi-modal task settings to support adaptation rather than reconstructing the states in the nominal setting, which is different from previous autoencoders in \cite{azencot2020forecasting,deka2022koopman}.

\subsubsection{Koopman operator}
In the encoded observable space, the Koopman operators propagate the observables forward to predict future states. 
The dynamic behavior of the system varies across different task settings. Accordingly, for each task setting $\Theta^i$, one set of Koopman operators $A^i$, $B^i$, and $C^i$ will be learned to characterize the specific dynamic behavior.
For a dataset consisting of $N$ sub-datasets, $N$ sets of Koopman operators will be learned. The dynamic behavior of the Koopman system under $\Theta^i$ is described as:
\begin{subequations}\label{Koopman operators}
\begin{align}
     g^i_{k+1|k}&=A^ig^i_{k|k}+B^iu^i_k\\
     {\hat{x}^i}_{k+1|k}&=C^ig^i_{k+1|k}
\end{align}
\end{subequations}
The Koopman operators ${A^i,B^i,C^i}$ in (\ref{Koopman operators}) and the parameters of the MNN, denoted by $\theta$, are trainable.  

\subsection{Meta-learning}
In this subsection, we elaborate on the details of learning a meta-adaptive Koopman model that can be effectively adapted to different task settings. 
The goal of meta-model learning is to find a set of parameters $\theta, {A^i,B^i,C^i}$ that minimize the multi-step-ahead prediction error characterized by:
\vspace{-0.3em}
\begin{subequations}\label{eq:loss}
\begin{gather}
    \mathcal{L}\left(\theta, {A^i,B^i,C^i}\right) = \mathbb{E}_{{p(\Theta^i)}}\frac{1}{HT}\sum_{k=1}^{T}\sum_{t=1}^{H} \Vert x_{k+t}^{{i}}-{C^i}g_{k+t|k}^{{i}}\Vert_2^2 \label{eq: loss a} \\
    g_{k+t|k}^{{i}} = {A^i} g_{k+t-1|k}^{{i}} + {B^i} u_{k+t-1}^{{i}} \\ 
    g_{k|k}^{{i}}= {\psi}_\theta\left(x^{{i}}_k\right), \{x_k^{{i}}, u_k^{{i}}\} \in \mathcal{D}_{{i}}
\end{gather}
\end{subequations}
where $H$ denotes the prediction horizon {and $T$ denotes the length of data trajectories}. 
The MAKO model is trained to minimize the expected prediction error under the unknown task distribution $p\left(\Theta^{{i}}\right)$, which in practice can be approximated by using a meta-dataset $\mathcal{D}_{\Theta}$ consisting of multiple sub-datasets $\mathcal{D}_i$, $i=1,\ldots,N$. 

Incorporating the concepts above, the optimization problem for MAKO modeling is formulated as follows:
\begin{subequations}\label{eq:loss_final}
\begin{align}
        \min_{\theta, \{A^i,B^i,C^i\}} \frac{1}{NTH}&\sum_{i=1}^N\sum_{k=1}^{T}\sum_{t=1}^{H}\Vert x^i_{k+t}-C^i g^{i}_{k+t|k}\Vert_2^2 \\
        \text{s.t. }g^{i}_{k+t|k} &= A^ig^{i}_{k+t-1|k}+B^iu^i_{k+t-1}\label{eq:seed propogation}\\ 
        g^{i}_{k|k}& = {\psi}_\theta\left(x^i_k\right), \{x_k^{{i}}, u_k^{{i}}\}\in \mathcal{D}_{{i}}\label{eq:seed observable}
\end{align}
\end{subequations}
{
\begin{remark}
    The proposed scheme is applicable to systems beyond linear time-invariant systems considered in \cite{molybog2021does,musavi2023convergence,toso2024meta}. However, we acknowledge the challenge of establishing a tight bound for the generalization and performance of the meta-trained Koopman model without access to the explicit forms of $p(\Theta)$ and $f(x, u, \Theta)$. This limitation contrasts with prior work, such as \cite{molybog2021does,musavi2023convergence,toso2024meta}, where such bounds have been more rigorously analyzed. Establishing performance guarantees for meta-learning in various types of parametrically uncertain systems may be a promising topic to explore in the future. 
\end{remark}}
\subsection{Online adaptation}

We elaborate on how to adapt the meta-trained Koopman model, which is presented in the previous subsection, to new tasks using online data. 

First, the Koopman operators learned on the meta-dataset can be combined to serve as an initial approximation of the exact Koopman operators in the new setting, denoted as ${\hat{A}_0,\hat{B}_0,\hat{C}_0}=\left\{\frac{1}{N}\sum A^i,\frac{1}{N}\sum B^i,\frac{1}{N}\sum C^i\right\}$.
Let ${\hat{\Psi}}_k := [\hat{A}_k, \hat{B}_k]$ denote the approximated Koopman operators at instant $k$, ${\hat{\Psi}}_0 =[\hat{A}_0, \hat{B}_0]$. $X_k:=[g_k^\mathrm{T},u_k^\mathrm{T}]^\mathrm{T}$ denotes the extended observable, one has the observable prediction error given by ${\widetilde{g}}_{k+1}:=g_{k+1} -\hat{\Psi}_{{k}}\ X_k$. In addition, the prediction error of the state is denoted as  ${\widetilde{x}}_{k+1}:=x_{k+1} -\hat{C}_{{k}} g_{k+1}$. Define the cost function for the state and observable prediction error,
$J(\hat{\Psi}_{{k}},\hat{C}_{{k}}) := \Vert g_{k+1}-{\hat{\mathrm{\Psi}}_{{k}}} X_k\Vert_2^2 + \Vert x_{k+1}-\hat{C}_{{k}}g_{k+1}\Vert_2^2$. 
To further refine the model, the online data generated during the exploitation stage will be used through gradient descent.
The gradient of $J$ with respect to $\hat{\Psi}_{{k}}$ and $\hat{C}_{{k}}$ can be obtained following 
\begin{subequations}
    \begin{align}
    \nabla_{\hat{\Psi}} J_k &:= \frac{\partial J\left(\hat{\Psi}_{{k}},\hat{C}_{{k}}\right)}{\partial\hat{\Psi}_{{k}}} = -X_k \widetilde{g}_{k+1}^\mathrm{T} \\
    \nabla_{\hat{C}} J_k &:= \frac{\partial J\left(\hat{\Psi}_{{k}},\hat{C}_{{k}}\right)}{\partial\hat{C}_{{k}}} = -g_{k+1} \widetilde{x}_{k+1}^\mathrm{T}
\end{align}
\label{eq: parameter gradient}
\end{subequations}
The update law for the Koopman operators is given as 
\begin{subequations}\label{eq:update law}
    \begin{align}
         \hat{\Psi}_{k+1}&=\hat{\Psi}_{k} - \lambda_k \nabla_{\hat{\Psi}} J_k^\mathrm{T}={\hat{\Psi}}_k+\lambda_k{\widetilde{g}}_{k+1}X_k^\mathrm{T}\\
         \hat{C}_{k+1}&=\hat{C}_{k} - \lambda_k \nabla_{\hat{C}} J_k^\mathrm{T}=\hat{C}_k+\lambda_k\widetilde{x}_{k+1}g_{k+1}^\mathrm{T}
    \end{align}
\end{subequations}
where $\lambda_k$ denotes the learning rate at time instant $k$. In the online adaptation, we propose to use an adaptive learning rate adopted from \cite{zhu2015adaptive} in the following form:
\begin{equation}
    \lambda_ k = \min\left(\frac{2-\alpha}{X^\mathrm{T}_k X^\mathrm{}_k}, \frac{2-\alpha}{g_{k{{+1}}}^\mathrm{T} g^\mathrm{}_{k{+1}}}\right)\label{eq:learning rate}
\end{equation}
where $\alpha$ is a pre-determined hyperparameter subject to $0<\alpha<2$.

\subsubsection{{Nominal adaptation}}
{We first consider the case where the exact Koopman operators exist on the finite-dimensional observable space. }
Before establishing the theoretical guarantee, we introduce the following assumptions. 
{
\begin{assumption}\label{assumption: bounded}
    The sets $\mathcal{X}$ and $\mathcal{U}$ are compact, and the set $\mathcal{X}$ is forward invariant
for the system $f$; that is, for any $\Theta\in\Xi$, $x\in \mathcal{X}$ and $u\in \mathcal{U}$, the
inclusion $f(x,u,\Theta)\in \mathcal{X}$ holds.
\end{assumption}
}

\begin{assumption}\label{assumption: 2}
    The lifting function ${\psi}_\theta(\cdot)$ and the system dynamics $f(\cdot, \cdot, \Theta)$ are continuous on $\mathcal{X}$ {and $\text{ }\mathcal{U}$}, $\forall \Theta \in \Xi$.
\end{assumption}

\begin{assumption}\label{assumption: 1}
    For each task setting $\Theta$, there exists a set of Koopman operators $A,B,C$ on the observable space encoded by the ${\psi}_\theta(\cdot)$, such that $f(x, u, \Theta) = C(A {\psi}_\theta(x) + Bu)$,
    for all $x\in{\mathcal{X}}$ and $u\in{\mathcal{U}}$.
\end{assumption}
Assumption~\ref{assumption: 1} essentially assumes the existence of {exact} Koopman operators. Based on the above assumptions, we present the following theorem on the convergence of the parameter approximation and model prediction errors. Let $\widetilde{\Psi}_k: = \Psi-\hat{\Psi}_k$ and $\widetilde{C}_k: = C- \hat{C}_k$ denote the parameter approximation error, where $\Psi:=\left[A, B\right]$ and $C$ denote the {exact Koopman operators}. 
Before proceeding, we introduce relevant properties of the trace of matrices to facilitate the proof.
\begin{property}\label{property}
For matrices ${M_1}$ and ${M_2}$, and vectors $x$ and $y$, all with proper dimensions, the following properties hold:
\begin{itemize}
    \item $\mathrm{tr}\left({M_1M_2}\right) = \mathrm{tr}\left({M_2M_1}\right)$
    \item $\mathrm{tr}\left({M_1+ M_2}\right) = \mathrm{tr}\left({M_1}\right) + \mathrm{tr}\left({M_2}\right)$
    \item $\mathrm{tr}\left(yx^\mathrm{T}\right) = \mathrm{tr}\left(x^\mathrm{T}y\right)$
\end{itemize}
where $\mathrm{tr}(\cdot)$ denotes the trace of a given matrix. 
\end{property}

\begin{theorem}\label{theorem: adaptation stability}
Consider the uncertain nonlinear system \eqref{eq: meta nonlinear system} with uncertain parameters $\Theta$ and the corresponding Koopman operators $A$, $B$, and $C$.  If Assumptions~\ref{assumption: bounded}-\ref{assumption: 1} hold, with the adaptive updating laws \eqref{eq:update law} and \eqref{eq:learning rate}, the parameter approximation errors $\widetilde{\Psi}_{{k}}$ and $\widetilde{C}_{{k}}$ are ultimately bounded and the predicted
state error $\widetilde{x}$ asymptotically converges to zero.
\end{theorem}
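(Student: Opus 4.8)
The plan is to run a discrete-time Lyapunov argument on the combined parameter-error energy $V_k := \mathrm{tr}(\widetilde{\Psi}_k^\mathrm{T}\widetilde{\Psi}_k) + \mathrm{tr}(\widetilde{C}_k^\mathrm{T}\widetilde{C}_k)$, which is nonnegative and measures the distance of the running estimates from the exact operators. The first step is to rewrite the two prediction errors in terms of the parameter errors. Under Assumption~\ref{assumption: 1} the exact operators reproduce the dynamics, so the true observable obeys $g_{k+1} = \Psi X_k$ and the true state obeys $x_{k+1} = C g_{k+1}$; substituting these into the definitions of $\widetilde{g}_{k+1}$ and $\widetilde{x}_{k+1}$ yields the clean identities $\widetilde{g}_{k+1} = \widetilde{\Psi}_k X_k$ and $\widetilde{x}_{k+1} = \widetilde{C}_k g_{k+1}$. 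Since $\Psi$ and $C$ are constant, the update law \eqref{eq:update law} translates directly into the error recursions $\widetilde{\Psi}_{k+1} = \widetilde{\Psi}_k - \lambda_k \widetilde{g}_{k+1}X_k^\mathrm{T}$ and $\widetilde{C}_{k+1} = \widetilde{C}_k - \lambda_k \widetilde{x}_{k+1}g_{k+1}^\mathrm{T}$.

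The second step is to expand $V_{k+1}$ with these recursions and simplify the cross terms using the trace identities in Property~\ref{property}. The cyclic and transpose-invariance properties collapse the cross term $\mathrm{tr}(\widetilde{\Psi}_k^\mathrm{T}\widetilde{g}_{k+1}X_k^\mathrm{T})$ into $\|\widetilde{g}_{k+1}\|_2^2$ (and analogously for the $C$-block), while the quadratic term factors as $\|X_k\|_2^2\|\widetilde{g}_{k+1}\|_2^2$. This produces the decrement
\[
\Delta V_k = -\lambda_k\|\widetilde{g}_{k+1}\|_2^2\left(2-\lambda_k\|X_k\|_2^2\right) - \lambda_k\|\widetilde{x}_{k+1}\|_2^2\left(2-\lambda_k\|g_{k+1}\|_2^2\right).
\]
The role of the adaptive learning rate \eqref{eq:learning rate} is now transparent: choosing $\lambda_k$ as the stated minimum guarantees $\lambda_k\|X_k\|_2^2 \le 2-\alpha$ and $\lambda_k\|g_{k+1}\|_2^2 \le 2-\alpha$, so both parenthesized factors are bounded below by $\alpha>0$. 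Hence $\Delta V_k \le -\alpha\lambda_k(\|\widetilde{g}_{k+1}\|_2^2 + \|\widetilde{x}_{k+1}\|_2^2) \le 0$, so $V_k$ is nonincreasing and, being bounded below, convergent; this immediately delivers the ultimate boundedness of $\widetilde{\Psi}_k$ and $\widetilde{C}_k$.

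For the final claim that $\widetilde{x}_k \to 0$, I would sum the inequality over $k$ to obtain $\alpha\sum_{k}\lambda_k\|\widetilde{x}_{k+1}\|_2^2 \le V_0 < \infty$. The remaining work is to strip the weight $\lambda_k$ out of this sum, and this is where the standing assumptions earn their place: Assumption~\ref{assumption: bounded} makes $\mathcal{X}$ and $\mathcal{U}$ bounded and forward invariant, and Assumption~\ref{assumption: 2} makes $\psi_\theta$ continuous, so $g_k = \psi_\theta(x_k)$ and hence $X_k$ and $g_{k+1}$ stay uniformly bounded along trajectories. This furnishes a strictly positive lower bound $\lambda_{\min} = (2-\alpha)/\max(\sup_k\|X_k\|_2^2,\sup_k\|g_{k+1}\|_2^2) > 0$, whence $\sum_k\|\widetilde{x}_{k+1}\|_2^2 \le V_0/(\alpha\lambda_{\min}) < \infty$ and therefore $\widetilde{x}_k \to 0$.

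I expect the main obstacle to be precisely this last lower-bounding of $\lambda_k$: the Lyapunov decrement only controls the $\lambda_k$-weighted error, so without a uniform bound on the regressors the learning rate could vanish and leave a persistent state error. Establishing that the signals cannot grow unbounded, which hinges on forward invariance of $\mathcal{X}$ together with continuity of the lifting, is the linchpin that upgrades boundedness of the parameter errors into asymptotic convergence of the prediction error. A secondary subtlety worth flagging is the reading of Assumption~\ref{assumption: 1}: the identity $\widetilde{g}_{k+1} = \widetilde{\Psi}_k X_k$ tacitly requires the exact observable relation $\psi_\theta(x_{k+1}) = A g_k + B u_k$, which I would make explicit at the outset so that the error recursions rest on firm footing.
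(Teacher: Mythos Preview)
Your proposal is correct and follows essentially the same route as the paper: the same Lyapunov function $V_k = \mathrm{tr}(\widetilde{\Psi}_k^\mathrm{T}\widetilde{\Psi}_k) + \mathrm{tr}(\widetilde{C}_k^\mathrm{T}\widetilde{C}_k)$, the same trace-based simplification of the cross terms via $\widetilde{g}_{k+1}=\widetilde{\Psi}_k X_k$, the same use of \eqref{eq:learning rate} to turn the decrement into $-\alpha\lambda_k(\|\widetilde{g}_{k+1}\|^2+\|\widetilde{x}_{k+1}\|^2)$, and the same appeal to Assumptions~\ref{assumption: bounded}--\ref{assumption: 2} to lower-bound $\lambda_k$ and extract $\widetilde{x}_k\to 0$ from the summable weighted series. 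Your write-up is, if anything, more explicit than the paper about the identity $\widetilde{g}_{k+1}=\widetilde{\Psi}_k X_k$ and about the fact that this reading of Assumption~\ref{assumption: 1} is being invoked at the observable level; the paper uses that step silently in the line $-2X_k^\mathrm{T}\widetilde{\Psi}_k^\mathrm{T}\widetilde{g}_{k+1}=-2\widetilde{g}_{k+1}^\mathrm{T}\widetilde{g}_{k+1}$.
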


\vspace{-1.7em}
\begin{proof}
The proof of Theorem~\ref{theorem: adaptation stability} is based on that of Theorem 1 in \cite{zhu2015adaptive}. First, let us select a Lyapunov candidate $V_k := \mathrm{tr}(\widetilde{\Psi}_k^{\mathrm{T}}\widetilde{\Psi}_k) + \mathrm{tr}(\widetilde{C}_k^{\mathrm{T}}\widetilde{C}_k)$. It can be derived that 
\begin{equation}\label{eq:proof:123}
\begin{aligned}
    V_{k+1} = & V_k + \mathrm{tr}\left(2\lambda_k\widetilde{\Psi}_k^\mathrm{T} \nabla_{\hat{\Psi}} J_k^\mathrm{T} + \lambda_k^2 \nabla_{\hat{\Psi}} J_k\nabla_{\hat{\Psi}} J_k^\mathrm{T}\right)\\
    &+ \mathrm{tr}\left(2\lambda_k\widetilde{C}_k^\mathrm{T} \nabla_{\hat{C}} J_k^\mathrm{T} + \lambda_k^2 \nabla_{\hat{C}} J_k\nabla_{\hat{C}} J_k^\mathrm{T}\right)
\end{aligned}
\end{equation}
By substituting \eqref{eq: parameter gradient} into (\ref{eq:proof:123}), the second and the third terms on the right-hand-side of (\ref{eq:proof:123}) can be computed as 
$\mathrm{tr}(2\lambda_k\widetilde{\Psi}_k^\mathrm{T} \nabla_{\hat{\Psi}} J^\mathrm{T}_k + \lambda_k^2 \nabla_{\hat{\Psi}} J_k\nabla_{\hat{\Psi}} J^\mathrm{T}_k)
=\lambda_k\mathrm{tr}(-2\widetilde{\Psi}_k^\mathrm{T}\widetilde{g}_{k+1}X_k^\mathrm{T} + \lambda_k X_k\widetilde{g}^\mathrm{T}_{k+1}\widetilde{g}_{k+1}X_k^\mathrm{T})
=\\\lambda_k(-2X_k^\mathrm{T}\widetilde{\Psi}_k^\mathrm{T}\widetilde{g}_{k+1} + \lambda_k X_k^\mathrm{T}X_k\widetilde{g}^\mathrm{T}_{k+1}\widetilde{g}_{k+1})
=\lambda_k(-2\widetilde{g}_{k+1}^\mathrm{T}\widetilde{g}_{k+1}\\ + \lambda_k X_k^\mathrm{T}X_k\widetilde{g}^\mathrm{T}_{k+1}\widetilde{g}_{k+1})
= \lambda_k(-2 + \lambda_k X_k^\mathrm{T}X_k)\widetilde{g}^\mathrm{T}_{k+1}\widetilde{g}_{k+1}$.
By taking \eqref{eq:learning rate} into account, one has
$\mathrm{tr}(2\lambda_k\widetilde{\Psi}_{{k}}^\mathrm{T} \nabla_{\hat{\Psi}} J_k^\mathrm{T} + \lambda_k^2 \nabla_{\hat{\Psi}} J_k\nabla_{\hat{\Psi}} J_k^\mathrm{T})\leq -\lambda_k \alpha \widetilde{g}^\mathrm{T}_{k+1}\widetilde{g}_{k+1}$.
Following similar derivations as above, the following inequality holds, 
$\mathrm{tr}(2\lambda_k\widetilde{C}_{{k}}^\mathrm{T} \nabla_{\hat{C}} J_k^\mathrm{T} + \lambda_k^2 \nabla_{\hat{C}} J_k\nabla_{\hat{C}} J_k^\mathrm{T})\leq -\lambda_k \alpha \widetilde{x}^\mathrm{T}_{k+1}\widetilde{x}_{k+1}$.
Therefore, it follows from \eqref{eq:proof:123} that
\begin{equation}
    V_{k+1} \leq V_k  -\lambda_k \alpha \left( \widetilde{g}^\mathrm{T}_{k+1}\widetilde{g}_{k+1} + \widetilde{x}^\mathrm{T}_{k+1}\widetilde{x}_{k+1}\right)\label{eq:proof-4}
\end{equation}
which implies that $V_k$ is decreasing as $k$ increases.  Furthermore, since $V_k\geq 0$, $\lim_{k\xrightarrow{}+\infty}V_k$ exists, and the parameter estimation errors $\widetilde{\Psi}$ and $\widetilde{C}$ are ultimately bounded. 

Applying \eqref{eq:proof-4} to all the time instants and aggregating the resulting inequalities yields:
\begin{equation}
    V_{k+1}\leq V_1 -  \alpha \sum_{i=1}^k \lambda_{{i}}\left(\widetilde{g}^\mathrm{T}_{{i}+1}\widetilde{g}_{{i}+1} + \widetilde{x}^\mathrm{T}_{{i}+1}\widetilde{x}_{{i}+1}\right)
\end{equation}
Consequently, 
$\alpha\sum_{k=1}^{+\infty} \lambda_k\left(\widetilde{g}^\mathrm{T}_{k+1}\widetilde{g}_{k+1} + \widetilde{x}^\mathrm{T}_{k+1}\widetilde{x}_{k+1}\right) \leq V_1 -  \lim_{k\xrightarrow{}+\infty}V_{k+1}$.  {Since $\mathcal{X}$ and $\mathcal{U}$ are bounded by Assumption~\ref{assumption: bounded}, and the lifting function $\psi_\theta(\cdot)$ is continuous according to Assumption~\ref{assumption: 2}, it can be inferred that $X^\mathrm{T}_k X_k$ and $g^\mathrm{T}_{k+1} g_{k+1}$ are upper bounded for all $k$. Consequently, there exists a positive constant $\underline{\lambda}$ such that $\lambda_k\geq\underline{\lambda}$ for all $k$. Therefore, we have $\alpha\underline{\lambda}\sum_{k=1}^{+\infty}\left(\widetilde{g}^\mathrm{T}_{k+1}\widetilde{g}_{k+1} + \widetilde{x}^\mathrm{T}_{k+1}\widetilde{x}_{k+1}\right) \leq V_1$.}
This indicates the convergence of the infinite series on the left-hand side, which implies $\widetilde{g} \rightarrow 0$ and $\widetilde{x} \rightarrow 0$.
\end{proof}

\subsubsection{{Robust adaptation}}
{
Assumption~\ref{assumption: 1} requires the existence of a set of Koopman operators that exactly characterize the original system on a finite-dimensional observable space. However, for the general class of nonlinear systems, particularly those with parametric uncertainties, the existence of a finite-dimensional invariant subspace cannot be guaranteed \cite{zeng2024sampling}. In the following, we consider the more practical case where the exact Koopman operators do not exist and the modeling errors are present, as follows:
\begin{subequations}\label{eq: robust Koopman system}
\begin{align}
    {\psi}_\theta(f(x_k, u_k, \Theta)) &= A {\psi}_\theta(x_k) + Bu_k +w_k\\
    x_k& = C {\psi}_\theta(x_k) +v_k
\end{align}
\end{subequations}

\vspace{-1.5em}

Following the derivation in \cite{zhang2022robust}, we can infer that the modeling errors $w\in \mathcal{W}\subset \mathbb{R}^h$ and $v\in \mathcal{V}\subset \mathbb{R}^n$ are bounded {based on Assumptions~\ref{assumption: bounded} and \ref{assumption: 2}}. Therefore, there exist positive constants $\epsilon_w$ and $\epsilon_v$ such that $\Vert w \Vert \leq \epsilon_w, \forall w \in \mathcal{W}$ and $\Vert v \Vert \leq \epsilon_v, \forall v \in \mathcal{V}$.

Next, we propose a robust adaptation scheme. The objective function containing the modeling errors is 
$\bar{J}(\hat{\Psi}_k,\hat{C}_k, w_k, v_k) = \Vert g_{k+1}-{\hat{\mathrm{\Psi}}_k} X_k -w_k\Vert_2^2 + \Vert x_{k+1}-\hat{C}_kg_{k+1} -v_k\Vert_2^2$.
To propose a robust adaptation scheme, we first introduce the ideal noise as 
\begin{align}\label{eq: optimal noise}
    w_k^*, v_k^* &= \min_{\substack{w_k\in\mathcal{W} \\v_k\in\mathcal{V}}} \bar{J}\left(\hat{\Psi}_k,\hat{C}_k, w_k, v_k\right)
\end{align}
The gradient of $\bar{J}$ with respect to $\hat{\Psi}_k$ and $\hat{C}_k$ can be obtained following 
$\nabla_{\hat{\Psi}} \bar{J}_k = -X_k (\widetilde{g}_{k+1}-w^*_k)^\mathrm{T}$, $
\nabla_{\hat{C}} \bar{J}_k = -g_{k+1} (\widetilde{x}_{k+1}-v^*_k)^\mathrm{T}$.
The robust update law at each time step is given as 
\begin{subequations}\label{eq: robust update law}
    \begin{align}
         \hat{\Psi}_{k+1}&={\hat{\Psi}}_k+\lambda_k(\widetilde{g}_{k+1}-w^*_k)X_k^\mathrm{T}\\\hat{C}_{k+1}&=\hat{C}_k+\lambda_k(\widetilde{x}_{k+1}-v^*_k)g_{k+1}^\mathrm{T}
    \end{align}
\end{subequations}
where $\lambda_k$ is determined following \eqref{eq:learning rate}.

\begin{theorem}\label{theorem: robust adaptation stability}
Consider the uncertain nonlinear system \eqref{eq: meta nonlinear system} with uncertain parameters $\Theta$. If {Assumptions~\ref{assumption: bounded} and \ref{assumption: 2}} hold, then under the updating laws \eqref{eq:learning rate}, \eqref{eq: optimal noise} and \eqref{eq: robust update law}, the parameter approximation errors $\widetilde{\Psi}_k$, $\widetilde{C}_k$ are ultimately bounded, and $\lim_{k\rightarrow\infty}\Vert\widetilde{x}_{k}\Vert \leq \epsilon_v$. 
\end{theorem}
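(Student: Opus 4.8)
The plan is to mirror the Lyapunov argument used for Theorem~\ref{theorem: adaptation stability}, keeping the same candidate $V_k := \mathrm{tr}(\widetilde{\Psi}_k^\mathrm{T}\widetilde{\Psi}_k) + \mathrm{tr}(\widetilde{C}_k^\mathrm{T}\widetilde{C}_k)$ while carefully tracking the extra terms introduced by the modeling errors. First I would record the identities linking the parameter errors to the prediction errors. From the robust model \eqref{eq: robust Koopman system} one has $g_{k+1} = \Psi X_k + w_k$ and $x_{k+1} = C g_{k+1} + v_{k+1}$, so that $\widetilde{g}_{k+1} = \widetilde{\Psi}_k X_k + w_k$ and $\widetilde{x}_{k+1} = \widetilde{C}_k g_{k+1} + v_{k+1}$, with $\Vert w_k\Vert\leq\epsilon_w$ and $\Vert v_{k+1}\Vert\leq\epsilon_v$. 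Subtracting the robust update law \eqref{eq: robust update law} from the exact operators yields the recursion $\widetilde{\Psi}_{k+1} = \widetilde{\Psi}_k - \lambda_k(\widetilde{g}_{k+1}-w_k^*)X_k^\mathrm{T}$ and its analogue for $\widetilde{C}_{k+1}$.

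Next I would expand $V_{k+1}-V_k$ with these recursions and Property~\ref{property}. The quadratic-in-$\lambda_k$ term reproduces the $\lambda_k^2 X_k^\mathrm{T}X_k\Vert\widetilde{g}_{k+1}-w_k^*\Vert^2$ contribution exactly as in the nominal case. The delicate part is the linear term $-2\lambda_k\,\mathrm{tr}(\widetilde{\Psi}_k^\mathrm{T}(\widetilde{g}_{k+1}-w_k^*)X_k^\mathrm{T})$; using $\widetilde{\Psi}_k X_k = \widetilde{g}_{k+1}-w_k$ and the cyclic property of the trace, this becomes the scalar $-2\lambda_k(\widetilde{g}_{k+1}-w_k)^\mathrm{T}(\widetilde{g}_{k+1}-w_k^*)$, which mixes the true noise $w_k$ with the ideal noise $w_k^*$.

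This cross term is the main obstacle, and I expect to clear it with the variational characterization of the minimizer in \eqref{eq: optimal noise}. Since $\bar J$ decouples, $w_k^*$ is the projection of $\widetilde{g}_{k+1}$ onto $\mathcal{W}$; assuming $\mathcal{W}$ is convex and that the true $w_k\in\mathcal{W}$, the projection inequality $(\widetilde{g}_{k+1}-w_k^*)^\mathrm{T}(w_k-w_k^*)\leq 0$ gives $(\widetilde{g}_{k+1}-w_k)^\mathrm{T}(\widetilde{g}_{k+1}-w_k^*)\geq\Vert\widetilde{g}_{k+1}-w_k^*\Vert^2$, and likewise for the $\widetilde{C}$ block with $v_{k+1}$ and $v_k^*$. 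Substituting these lower bounds and invoking the learning rate \eqref{eq:learning rate}, so that $-2+\lambda_k X_k^\mathrm{T}X_k\leq-\alpha$, collapses the estimate to
\[
V_{k+1}\leq V_k - \lambda_k\alpha\left(\Vert\widetilde{g}_{k+1}-w_k^*\Vert^2 + \Vert\widetilde{x}_{k+1}-v_k^*\Vert^2\right).
\]
As $V_k\geq 0$ is nonincreasing, its limit exists, which yields the ultimate boundedness of $\widetilde{\Psi}_k$ and $\widetilde{C}_k$.

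Finally I would telescope this inequality over $k$ and reuse the boundedness/continuity argument from Theorem~\ref{theorem: adaptation stability} (Assumptions~\ref{assumption: bounded}--\ref{assumption: 2}) to secure a uniform lower bound $\lambda_k\geq\underline{\lambda}>0$, so that $\alpha\underline{\lambda}\sum_k(\Vert\widetilde{g}_{k+1}-w_k^*\Vert^2+\Vert\widetilde{x}_{k+1}-v_k^*\Vert^2)\leq V_1<\infty$. Convergence of this series forces $\Vert\widetilde{x}_{k+1}-v_k^*\Vert\to 0$. Since $v_k^*\in\mathcal{V}$ implies $\Vert v_k^*\Vert\leq\epsilon_v$, the triangle inequality $\Vert\widetilde{x}_{k+1}\Vert\leq\Vert\widetilde{x}_{k+1}-v_k^*\Vert+\Vert v_k^*\Vert$ gives $\lim_{k\to\infty}\Vert\widetilde{x}_k\Vert\leq\epsilon_v$, as claimed. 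The one modeling caveat worth flagging is the convexity of $\mathcal{W}$ and $\mathcal{V}$ needed for the projection inequality, which holds when they are taken as the norm balls $\{\,\Vert w\Vert\leq\epsilon_w\,\}$ and $\{\,\Vert v\Vert\leq\epsilon_v\,\}$.
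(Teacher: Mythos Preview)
Your proposal is correct and follows essentially the same Lyapunov argument as the paper's proof, arriving at the identical decrease inequality $V_{k+1}\leq V_k-\lambda_k\alpha(\Vert\widetilde{g}_{k+1}-w_k^*\Vert^2+\Vert\widetilde{x}_{k+1}-v_k^*\Vert^2)$ and concluding in the same way. You are in fact slightly more explicit than the paper at the key step: where the paper simply asserts that optimality of $w_k^*$ in \eqref{eq: optimal noise} gives $(\widetilde{g}_{k+1}-w_k)^\mathrm{T}(\widetilde{g}_{k+1}-w_k^*)\geq\Vert\widetilde{g}_{k+1}-w_k^*\Vert^2$, you correctly identify this as the projection inequality onto a convex set and flag that it requires $\mathcal{W},\mathcal{V}$ to be convex (which the norm-ball description in the paper indeed satisfies).
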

\vspace{-0.5em}
\begin{proof}
The proof of Theorem~\ref{theorem: robust adaptation stability} follows a similar procedure as in Theorem~\ref{theorem: adaptation stability}. We adopt $V_k := \mathrm{tr}(\widetilde{\Psi}_k^{\mathrm{T}}\widetilde{\Psi}_k) + \mathrm{tr}(\widetilde{C}_k^{\mathrm{T}}\widetilde{C}_k)$ as the Lyapunov function. Let $\Delta(\widetilde{\Psi}_k,\nabla_{\hat{\Psi}} \bar{J}_k):=\mathrm{tr}(2\lambda_k\widetilde{\Psi}_k^\mathrm{T} \nabla_{\hat{\Psi}} \bar{J}_k^\mathrm{T} + \lambda_k^2 \nabla_{\hat{\Psi}} \bar{J}_k\nabla_{\hat{\Psi}} \bar{J}_k^\mathrm{T})$. With the update law described in \eqref{eq: robust update law}, 
$\Delta(\widetilde{\Psi}_k,\nabla_{\hat{\Psi}} \bar{J}_k)
=\lambda_k\mathrm{tr}(-2\widetilde{\Psi}_k^\mathrm{T}(\widetilde{g}_{k+1} -w_k^*)X_k^\mathrm{T}+ \lambda_k X_k(\widetilde{g}_{k+1}-w_k^*)^\mathrm{T}(\widetilde{g}_{k+1}-w_k^*)X_k^\mathrm{T})
= \lambda_k(-2X_k^\mathrm{T}\widetilde{\Psi}_k^\mathrm{T}(\widetilde{g}_{k+1}-w_k^*) 
+ \lambda_k X_k^\mathrm{T}X_k\Vert\widetilde{g}_{k+1}-w_k^*\Vert_2^2)
= \lambda_k(-2(\widetilde{g}_{k+1}-w_k)^\mathrm{T}(\widetilde{g}_{k+1}-w_k^*) 
+ \lambda_k X_k^\mathrm{T}X_k\Vert\widetilde{g}_{k+1}-w_k^*\Vert_2^2)$. 
Consider that $w_k^*$ is the optimal solution to \eqref{eq: optimal noise}, it can be inferred that 
$(\widetilde{g}_{k+1}-w_k)^\mathrm{T}(\widetilde{g}_{k+1}-w_k^*) \geq \Vert\widetilde{g}_{k+1}-w_k^*\Vert_2^2$. 
It follows that $\Delta(\widetilde{\Psi}_k,\nabla_{\hat{\Psi}} \bar{J}_k)\leq \lambda_k\left(-2 + \lambda_k X_k^\mathrm{T}X_k\right)\Vert\widetilde{g}_{k+1}-w_k^*\Vert_2^2$. 
Incorporating the learning rate \eqref{eq:learning rate}, one has 
$\Delta(\widetilde{\Psi}_k,\nabla_{\hat{\Psi}} \bar{J}_k)\leq -\alpha\lambda_k\Vert\widetilde{g}_{k+1}-w_k^*\Vert_2^2$.
Apply similar derivation to $\Delta(\widetilde{C}_k,\nabla_{\hat{C}} \bar{J}_k)$, and combine with \eqref{eq:proof:123}, one has 
\begin{equation}
    V_{k+1} \leq V_k  -\alpha\lambda_k  \left( \Vert\widetilde{g}_{k+1}-w_k^*\Vert_2^2 + \Vert\widetilde{x}_{k+1}-v_k^*\Vert_2^2\right)\label{eq:proof2-4}
\end{equation}
Therefore, it is proved that the parameter approximation errors $\widetilde{\Psi}_k$ and $\widetilde{C}_k$ are ultimately bounded. Aggregating the inequalities \eqref{eq:proof2-4} for all time instants, it can be inferred that $\lim_{k\rightarrow\infty}\Vert\widetilde{g}_{k+1} - w^*_k\Vert=0$ and $\lim_{k\rightarrow\infty}\Vert\widetilde{x}_{k+1} - v^*_k\Vert=0$ {based on Assumptions~\ref{assumption: bounded} and \ref{assumption: 2}}. In addition, according to Assumption~\ref{assumption: 2}, $\mathcal{W}$ and $\mathcal{V}$ are bounded, and $\Vert w^*_k\Vert \leq \epsilon_w$ and $\Vert v^*_k\Vert \leq \epsilon_v$. Therefore, we have $\lim_{k\rightarrow\infty}\Vert\widetilde{x}_{k+1}\Vert \leq \epsilon_v$, which concludes the proof.
\end{proof}

\vspace{-0.5em}
\begin{remark}
    The continuity of ${\psi}_\theta(\cdot)$, as considered in Assumption 2, can be guaranteed by adopting activation functions that are continuous, such as ReLU, ELU, and Sigmoid etc. 
\end{remark}

\begin{remark}
The proofs of Theorem~\ref{theorem: adaptation stability} and Theorem~\ref{theorem: robust adaptation stability} are built based on the theoretical results in \cite{zhu2015adaptive}. Compared to \cite{zhu2015adaptive}, the theoretical contributions of this current work are two-fold. 
First, \cite{zhu2015adaptive} relies on the assumption of a small, constant learning rate to ensure convergence, while this work employs a dynamic learning rate as defined in \eqref{eq:learning rate}. This dynamic learning rate adapts based on the state and input data at each time step, and this eases the restrictive condition of requiring a fixed, small learning rate.
Second, while \cite{zhu2015adaptive} focuses on nominal linear systems, this work extends the Koopman-based framework to nonlinear systems with parametric uncertainties. Modeling residuals on the finite-dimensional space are further considered, which is different from the noise-free setting in \cite{zhu2015adaptive}. 
\end{remark}
}
\vspace{-0.3em}
\section{Meta-Koopman-based Adaptive Model Predictive Control}
In this section, we propose an adaptive model predictive control (AMPC) approach based on the learned MAKO model. First, we present the MAKO-based AMPC design. Subsequently, we establish the stability criterion for the resulting closed-loop system. 

\vspace{-0.8em}

\subsection{MAKO-based Adaptive MPC}
Based on the meta Koopman model, we design an MPC scheme \cite{borrelli2017predictive} to solve the finite horizon optimal control problem, minimizing the cumulative stage cost. In the nominal setting, the dynamics of ${\hat g}_k$ can be described by the MAKO model as follows:
\begin{subequations} \label{eq:nominal dynamic}
\begin{align}
    \hat{g}_{k+1|k} =& \hat{A}_k \hat{g}_{k|k} + \hat{B}_k u_{k}\\
    \hat{x}_{k+1|k} = &\hat{C}_k \hat{g}_{k+1|k}
\end{align}
\end{subequations}
where $\hat{g}_{k|k}$ is the nominal observable encoded by the MNN. 
The AMPC solves the following deterministic optimal tracking control problem 
\begin{subequations}\label{eq:nominal MPC}
    \begin{align}
    \min_{u_{k:k+T}}V_k = & \sum_{t=1}^{T} \Vert \hat{C}_k\hat{g}_{k+t|k}-x_s\Vert^2_Q + \Vert \Delta u_{k+t|k}\Vert_R^2\label{eq:nominal MPC1}\\\label{eq:nominal MPC2}
    \text{s.t. } &\hat{g}_{k+t+1} = \hat{A}_k \hat{g}_{k+t|k} + \hat{B}_k u_{k+t|k}\\
    & \hat{C}_k\hat{g}_{k+T+1|k} = x_s\label{eq:nominal MPC3}
    \\&\Delta u_{k+t|k} = u_{k+t|k}- u_{k+t-1|k}
    \\&u_{k+t|k}\in \mathcal{U}
\end{align}
\end{subequations}
where $x_s$ is the given set-point in the original state space, $Q$ and $R$ are known positive definite weighting matrices, and $\Vert \cdot \Vert^2_W$ represents the weighted Euclidean norm with $W$ being a positive definite matrix. At each sampling instant, the AMPC problem \eqref{eq:nominal MPC} is solved and the corresponding optimal control input $u^*_{k|k}$ is applied to the system in (\ref{eq: meta nonlinear system}). In the meantime, the Koopman matrices $\hat{A}_k, \hat{B}_k$ and $ \hat{C}_k$ are updated according to the update law described in \eqref{eq:update law}.

\subsection{Stability with learned Koopman operators}
Next, we prove the stability of the closed-loop system based on the AMPC controller in \eqref{eq:nominal MPC}. 


\begin{theorem}\label{prop:stability proposition}
Consider the nonlinear system \eqref{eq: meta nonlinear system} with the adaptive updating law in \eqref{eq:update law} and \eqref{eq:learning rate}, and the MPC controller in \eqref{eq:nominal MPC}. If {Assumptions~\ref{assumption: bounded}-\ref{assumption: 1} hold,} then the tracking error of the closed-loop system is asymptotically stable. 
\end{theorem}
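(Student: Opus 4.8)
The plan is to use the optimal value function $V_k^*$ of the MPC problem \eqref{eq:nominal MPC} as a Lyapunov function for the closed-loop tracking-error dynamics, and to combine the classical terminal-equality-constraint stability argument with the adaptation-error convergence already established in Theorem~\ref{theorem: adaptation stability}. First I would verify that $V_k^*$ is a legitimate Lyapunov candidate: it is nonnegative by construction, and its first-stage term $\Vert \hat{C}_k\hat{g}_{k+1|k}-x_s\Vert_Q^2$, together with $\hat{g}_{k|k}={\psi}_\theta(x_k)$ and the reconstruction relation in Assumption~\ref{assumption: 1}, lets me sandwich $V_k^*$ between class-$\mathcal{K}$ functions of the true tracking error $\Vert x_k-x_s\Vert$, up to the prediction error $\widetilde{x}_k$ that vanishes by Theorem~\ref{theorem: adaptation stability}.

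Next I would establish recursive feasibility. Assuming \eqref{eq:nominal MPC} is feasible at instant $k$ with optimizer $u^*_{k|k},\dots,u^*_{k+T|k}$ whose predicted terminal state satisfies $\hat{C}_k\hat{g}_{k+T+1|k}=x_s$, I would construct a candidate at $k+1$ by discarding the applied input, shifting the remaining sequence, and appending a terminal input that renders $x_s$ an equilibrium of the current model (a standard steady-state admissibility condition, guaranteed feasible under Assumption~\ref{assumption: 1}). The nonstandard feature is that the model has been updated from $(\hat{A}_k,\hat{B}_k,\hat{C}_k)$ to $(\hat{A}_{k+1},\hat{B}_{k+1},\hat{C}_{k+1})$ by \eqref{eq:update law}, and the initial observable is now the measured ${\psi}_\theta(x_{k+1})$ rather than the predicted $\hat{g}_{k+1|k}$; I would quantify the induced terminal-constraint mismatch by the one-step observable error $\widetilde{g}_{k+1}$ and the parameter increments $\hat{\Psi}_{k+1}-\hat{\Psi}_k$ and $\hat{C}_{k+1}-\hat{C}_k$, all of which are controlled by the learning rate \eqref{eq:learning rate} and shrink as the prediction errors decay.

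With feasibility in hand, I would derive the descent inequality. Evaluating the cost of the shifted candidate and invoking optimality $V_{k+1}^*\le V_{k+1}^{\mathrm{cand}}$, the shift cancels the first-stage cost and yields $V_{k+1}^*\le V_k^* - \Vert \hat{C}_k\hat{g}_{k+1|k}-x_s\Vert_Q^2 + \varepsilon_k$, where $\varepsilon_k$ collects the perturbations arising from the model update and from replacing the predicted initial observable by the measured one. By Theorem~\ref{theorem: adaptation stability} the series $\sum_k(\widetilde{g}_{k+1}^{\mathrm{T}}\widetilde{g}_{k+1}+\widetilde{x}_{k+1}^{\mathrm{T}}\widetilde{x}_{k+1})$ converges, so the increments in \eqref{eq:update law} are summable and $\varepsilon_k\to 0$; summing the descent inequality and using boundedness of $V_k^*$ then forces the stage cost $\Vert \hat{C}_k\hat{g}_{k+1|k}-x_s\Vert_Q^2\to 0$. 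Combining this with $\widetilde{x}_k\to 0$ and the reconstruction in Assumption~\ref{assumption: 1} gives $x_k\to x_s$, which together with the lower bound from the first step yields asymptotic stability of the tracking error.

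The hardest part will be the time-varying model: the classical shift argument for both recursive feasibility and value-function monotonicity presupposes a \emph{fixed} prediction model, whereas here the operators are re-estimated at every step. The crux is therefore to show that the extra terms injected into the feasibility construction and the cost comparison are not merely bounded but summable or vanishing, which requires carefully tying the parameter increments from \eqref{eq:update law}--\eqref{eq:learning rate} to the convergent error series of Theorem~\ref{theorem: adaptation stability}. Reconciling the fact that the operators themselves are only guaranteed ultimately bounded, rather than convergent, with the need for an asymptotically exact terminal-invariance condition on $x_s$ is the most delicate technical point.
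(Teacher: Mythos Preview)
Your proposal is correct and follows essentially the same route as the paper: take the optimal MPC cost $V_k^*$ as a Lyapunov candidate, use the shifted-and-extended control sequence as a feasible candidate at $k{+}1$, absorb the effect of the time-varying Koopman estimates into a perturbation term that vanishes thanks to the prediction-error convergence of Theorem~\ref{theorem: adaptation stability}, and finish by combining the resulting stage-cost decay with $\widetilde{x}_k\to 0$. If anything, you are more careful than the paper on two points it treats lightly---the recursive-feasibility argument under the updated terminal constraint, and the distinction between the predicted initial observable $\hat{g}_{k+1|k}$ and the measured ${\psi}_\theta(x_{k+1})$---and you correctly flag that the operators being only ultimately bounded (not convergent) is the delicate step; the paper handles this by arguing that the \emph{increments} $\Delta\hat{\Psi}_k,\Delta\hat{C}_k$ (and hence the trajectory deviations $\delta_k^t$) go to zero because $\widetilde{g}_k,\widetilde{x}_k\to 0$, which is exactly the mechanism you outline.
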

\begin{proof}
By solving the MPC problem \eqref{eq:nominal MPC}, we can obtain the feasible optimal control sequence at time $k$, 
$\{u^*_{k|k},u^*_{k+1|k}, \dots, u^*_{k+T|k} \}$,
and the resulting predicted optimal observable trajectory $\{\hat{g}^*_{k|k},\hat{g}^*_{k+1|k}, \dots, \hat{g}^*_{k+T|k}, \\\hat{g}^*_{k+T+1|k} \}$.
Consider an extended input trajectory 
\begin{equation}\label{eq:subopt control seq}
\{u^*_{k+1|k}, \dots, u^*_{k+T|k}, u^*_{k+T+1|k} \},
\end{equation}
where the control input at time instant $k+T+1$ remains unchanged from $u^*_{k+T|k}$, i.e., $u^*_{k+T+1|k}=u^*_{k+T|k}$. 

Due to the update of the Koopman operators, there exists a deviation $\delta$ between the state trajectory $\{\hat{x}^*_{k+t|k} \}$ predicted at instant $k$ and $\{\hat{x}^*_{k+t|k+1} \}$ predicted at $k+1$, where $t$ refers to the time step within the prediction horizon. $\delta_k^t := (\hat{C}_{k+1}\hat{A}^t_{k+1} - \hat{C}_{k}\hat{A}^t_{k})\hat{g}^*_{k+1|k} + \sum_{j=1}^t(\hat{C}_{k+1}\hat{A}^{j-1}_{k+1}\hat{B}_{k+1} - \hat{C}_{k}\hat{A}^{j-1}_{k}\hat{B}_{k})u^*_{k+j|k}$. 
{Denote the update step as $\Delta \hat{\Psi}_k:=\hat{\Psi}_{k+1}-\hat{\Psi}_{k}$ and $\Delta \hat{C}_k:=\hat{C}_{k+1}-\hat{C}_{k}$. It can be observed that $\delta_k^t$ goes to zero, if $\Delta \hat{\Psi}_k$ and $\Delta \hat{C}_k$ approach zero.}

In the following, we adopt the value function $V^*_k = \sum_{t=1}^{T} \Vert \hat{C}_k\hat{g}^*_{k+t|k} -x_s\Vert^2_Q + \Vert \Delta u^*_{k+t|k}\Vert_R^2$ as the Lyapunov candidate. The suitability of $V_k$ as a Lyapunov candidate has been proved in \cite{grune2017nonlinear}.
Consider the Lyapunov candidate $V^*_{k+1}$ at time $k+1$, $V^*_{k+1} = \sum_{t=2}^{T+1} \left(\Vert \hat{C}_{k+1}\hat{g}^*_{k+t|k+1} - x_s\Vert^2_Q + \Vert \Delta u^*_{k+t|k+1}\Vert_R^2\right)$. 
Due to optimality, the value of $V^*_{k+1}$ is no greater than the value function of the sub-optimal solution, that is,
$V^*_{k+1}
\leq \sum_{t=2}^{T+1}( \Vert \hat{C}_{k}\hat{g}^*_{k+t|k} -x_s + \delta^t_k\Vert^2_Q + \Vert \Delta u^*_{k+t|k} \Vert_R^2)
\leq \sum_{t=2}^{T+1} (\Vert \hat{C}_k\hat{g}^*_{k+t|k}-x_s\Vert^2_Q+\Vert \delta^t_k\Vert^2_Q + \Vert \Delta u^*_{k+t|k}\Vert_R^2)  
= V^*_k + \Vert \hat{C}_k\hat{g}^*_{k+T+1|k}-x_s\Vert^2_Q + \Vert \Delta u^*_{k+T+1|k}\Vert_R^2- \Vert \hat{C}_k\hat{g}^*_{k+1|k}-x_s\Vert^2_Q - \Vert \Delta u^*_{k+1|k}\Vert_R^2+ \sum_{t=2}^{T+1} \Vert \delta^t_k\Vert^2_Q$.
It follows from \eqref{eq:nominal MPC3} and \eqref{eq:subopt control seq} that $\hat{C}_k\hat{g}^*_{k+T+1|k} = x_s$ and $\Delta u^*_{k+T+1|k} = \mathbf{0}$. Note that $\lim_{k\rightarrow\infty}\Vert\widetilde{g}_k\Vert=0$ and $\lim_{k\rightarrow\infty}\Vert\widetilde{x}_k\Vert=0$ according to Theorem~\ref{theorem: adaptation stability}, hence $\Delta\hat{\Psi}_k$ and $\Delta\hat{C}_k$ go to zero, and the residual approaches zero $\lim_{k\rightarrow\infty}\delta_k^t = 0$ as a consequence. Therefore, it can be inferred that the tracking error of the nominal system in \eqref{eq:nominal dynamic} is asymptotically stable. Furthermore, incorporate the result from Theorem~\ref{theorem: adaptation stability} that the state prediction error {asymptotically converges to zero}, the closed-loop system is asymptotically stable.  
\end{proof}

{
\begin{corollary}\label{corollary:1}
    Consider the nonlinear system \eqref{eq: meta nonlinear system} with the adaptive updating law given by \eqref{eq:learning rate}, \eqref{eq: optimal noise}, and \eqref{eq: robust update law}, and the MPC controller in \eqref{eq:nominal MPC}. If {Assumptions ~\ref{assumption: bounded} and \ref{assumption: 2} hold}, then the tracking error of the closed-loop system is ultimately bounded. 
\end{corollary}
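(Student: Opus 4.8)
\section*{Proof proposal for Corollary~\ref{corollary:1}}

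The plan is to mirror the proof of Theorem~\ref{prop:stability proposition}, re-using the shifted-horizon suboptimal control sequence and the optimal value function $V^*_k$ as a Lyapunov candidate, but replacing every appeal to the nominal adaptation result (Theorem~\ref{theorem: adaptation stability}) with the robust adaptation result (Theorem~\ref{theorem: robust adaptation stability}). The crucial structural difference is that Theorem~\ref{theorem: robust adaptation stability} no longer drives the state prediction error to zero; it only guarantees $\lim_{k\to\infty}\Vert\widetilde{x}_k\Vert\leq\epsilon_v$. Consequently, I expect the closed-loop tracking error to inherit a residual of the same order $\epsilon_v$, i.e. to be ultimately bounded rather than asymptotically stable, which is exactly the claim of Corollary~\ref{corollary:1}.

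First I would reconstruct the extended input trajectory \eqref{eq:subopt control seq} with the terminal control repeated, and retain the definition of the inter-sample deviation $\delta_k^t$ between the trajectory predicted at instant $k$ and that predicted at $k+1$. The key subclaim is that $\delta_k^t\to 0$ still holds under the robust update law. To establish this I would note that under \eqref{eq: robust update law} the update steps read $\Delta\hat{\Psi}_k=\lambda_k(\widetilde{g}_{k+1}-w_k^*)X_k^\mathrm{T}$ and $\Delta\hat{C}_k=\lambda_k(\widetilde{x}_{k+1}-v_k^*)g_{k+1}^\mathrm{T}$, and then invoke the two limits $\lim_{k\to\infty}\Vert\widetilde{g}_{k+1}-w_k^*\Vert=0$ and $\lim_{k\to\infty}\Vert\widetilde{x}_{k+1}-v_k^*\Vert=0$ obtained in the proof of Theorem~\ref{theorem: robust adaptation stability}, together with the boundedness of $X_k$ and $g_{k+1}$ (guaranteed by Assumptions~\ref{assumption: bounded} and \ref{assumption: 2}) and of $\lambda_k$. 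This forces $\Delta\hat{\Psi}_k\to 0$ and $\Delta\hat{C}_k\to 0$, hence $\delta_k^t\to 0$, even though the exact Koopman operators need not exist.

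With $\delta_k^t\to 0$ in hand, the same optimality and terminal-constraint steps as in Theorem~\ref{prop:stability proposition} would yield the decrease inequality $V^*_{k+1}\leq V^*_k-\Vert\hat{C}_k\hat{g}^*_{k+1|k}-x_s\Vert_Q^2-\Vert\Delta u^*_{k+1|k}\Vert_R^2+\sum_{t=2}^{T+1}\Vert\delta_k^t\Vert_Q^2$, i.e. a Lyapunov decrease up to an asymptotically vanishing perturbation, which drives the nominal-model tracking error $\hat{C}_k\hat{g}^*_{k+1|k}-x_s$ to zero. Finally I would split the true tracking error as $x_k-x_s=\widetilde{x}_k+(\hat{x}_{k|k}-x_s)$, where the second term converges to zero while the first is ultimately bounded by $\epsilon_v$ through Theorem~\ref{theorem: robust adaptation stability}, so the closed-loop tracking error is ultimately bounded. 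The main obstacle will be the vanishing-perturbation Lyapunov argument: since $\delta_k^t\to 0$ does not by itself render $V^*_k$ monotone, I would need an input-to-state-stability or summability-type argument to conclude convergence of the nominal tracking error in the presence of the perturbation, and to make the final ultimate bound explicit in terms of $\epsilon_v$ (and the operator norms entering $\delta_k^t$) rather than merely asserting boundedness.
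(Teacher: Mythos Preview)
Your proposal is correct and follows essentially the same route as the paper: reproduce the argument of Theorem~\ref{prop:stability proposition} with Theorem~\ref{theorem: robust adaptation stability} in place of Theorem~\ref{theorem: adaptation stability}, and then combine the asymptotic stability of the nominal tracking error with the residual bound $\lim_{k\to\infty}\Vert\widetilde{x}_k\Vert\leq\epsilon_v$ to obtain ultimate boundedness. In fact you supply more detail than the paper does---in particular your explicit justification that $\Delta\hat{\Psi}_k,\Delta\hat{C}_k\to 0$ (and hence $\delta_k^t\to 0$) under the robust update law \eqref{eq: robust update law}, and your candid remark about the vanishing-perturbation step, both address points the paper's two-sentence proof leaves implicit.
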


\begin{proof}
    The proof of Corollary~\ref{corollary:1} follows the same procedure adopted in the proof of Theorem~\ref{prop:stability proposition}, which establishes the asymptotic stability of the tracking error of the nominal system \eqref{eq:nominal dynamic}. By further incorporating that $\lim_{k\rightarrow\infty}\Vert\widetilde{x}_{k+1}\Vert \leq \epsilon_v$ according to Theorem~\ref{theorem: robust adaptation stability}, the tracking error of the closed-loop system can be proved to be ultimately bounded. 
\end{proof}
}
\section{Results}

In this section, the proposed MAKO learning-based control framework is evaluated and compared to a competitive baseline on three benchmark examples via simulations. {The codes for reproducing our results can be found at the link provided in the footnote\footnote{\url{https://github.com/hithmh/Meta-Koopman}}.}

\subsection{Simulation setup and baselines}\label{subsec:Simulation setup and baselines}


\subsubsection{Cartpole}
\vspace{-0.3em}
First, we consider a classic cartpole balancing problem {\cite{sutton2018reinforcement}}. The controller is expected to maintain the pendulum in its upright, vertical orientation. The state vector comprises $[x, \dot x, \theta, \dot \theta]^\text{T}$, where $x$ denotes the horizontal position of the cart and $\theta$ denotes the angular position of the pole in rads. The action is the horizontal force applied to the cart ($u\in[-20, 20]$). $x_{\text{threshold}}$ and $\theta_{\text{threshold}}$ represents the maximum position and angle, respectively, with $x_{\text{threshold}}=10$ and $\theta_{\text{threshold}}=20^\circ$. An episode terminates prematurely if $\vert \theta\vert> \theta_{\text{threshold}}$ . The stage cost for control performance evaluation is given by $c = 0.1*(x/x_{\text{threshold}})^2 + (\theta/\text{threshold})^2$. In the Cartpole system, the length of the pole $l_p$ and the mass of the pole $m_p$ are considered uncertain, with $l_p\in [0.1m, 1.0m]$ and $m_p\in [0.01 kg, 0.2kg]$. Their nominal values are $l_p = 0.5m $ and $m_p = 0.1kg$. The value of $\alpha$ is set to 1.995, and the weighting matrices are $Q=\text{diag}([0.01, 0, 1, 0.2])$, where $\text{diag}(\cdot)$ denotes constructing a diagonal matrix with the given vector; $R = 0.01$.

\subsubsection{Gene regulatory network (GRN)}

MAKO is also applied to a biological gene regulatory network (GRN), which constitutes a synthetic three-gene regulatory network characterized by the oscillatory dynamics of mRNAs and proteins~\cite{sootla2013periodic}. The state vector is $[m_1, m_2, m_3, p_1, p_2, p_3]^\text{T}$, where $m_{1,2,3}$ denotes the concentration of mRNA for the corresponding genes and $p_{1,2,3}$ denotes the concentration of the proteins. The control input $u$ will be executed through light control signals capable of triggering gene expression via the activation of their photosensitive promoters. The controller is expected to maintain the concentration of protein 1 $p_1$ to 6. A more detailed description of the controlled GRN system is referred to \cite{sootla2013periodic}.
In the GRN system, the dissociation constant $K$ and the input scalar $b_1$ to protein 1 are assumed to be uncertain, $K\in [2, 8]$ and  $b_1\in [3, 7]$. Their nominal values are $K= 5$ and $b_1 =5$. The value of $\alpha$ is 1.1, and the weighting matrices are $Q=\text{diag}([0, 0, 0, 1, 0, 0])$, $R = \text{diag}([0.01, 0.01, 0.01])$.

\begin{figure}

        \centering
    \includegraphics[width=0.2\textwidth]{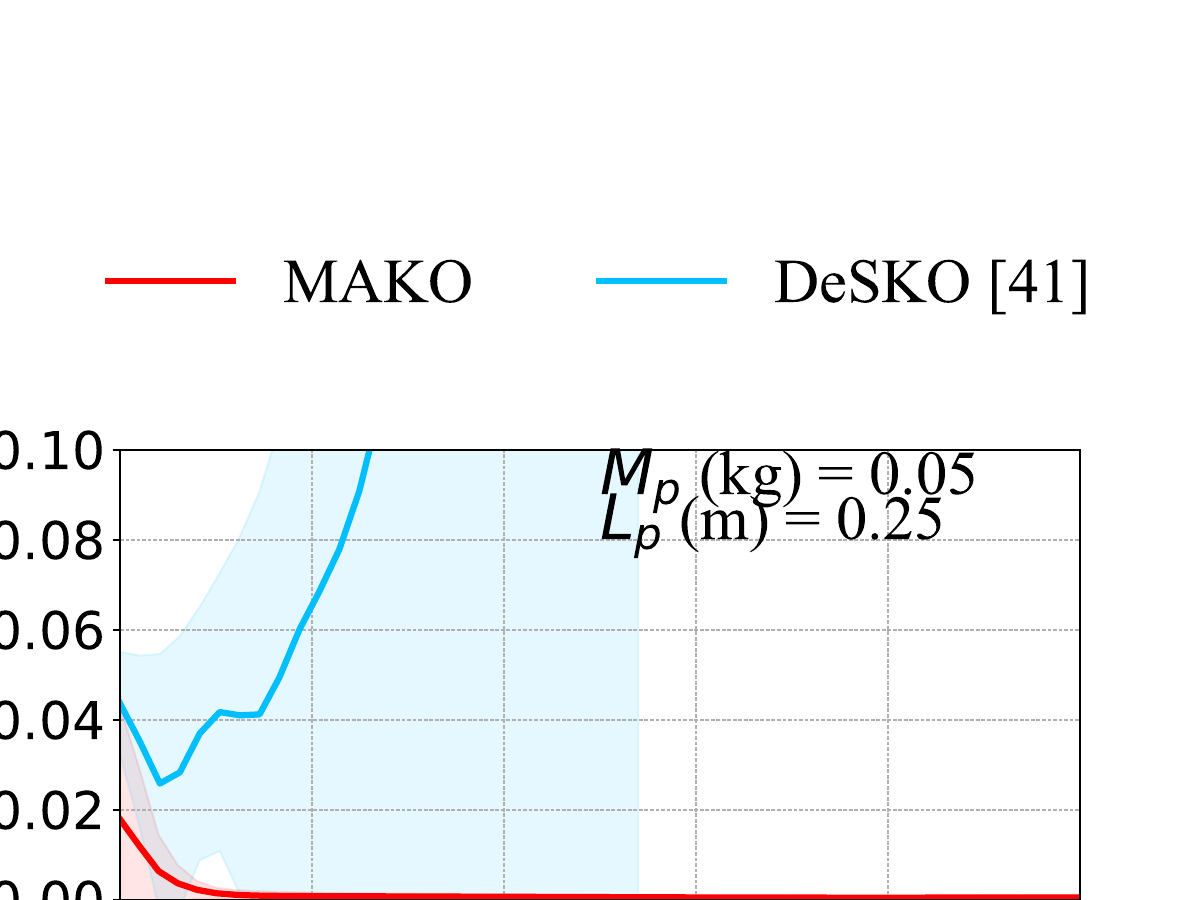}
    \centering
    \subfigure[Cartpole]{
    \includegraphics[width=0.95\columnwidth]{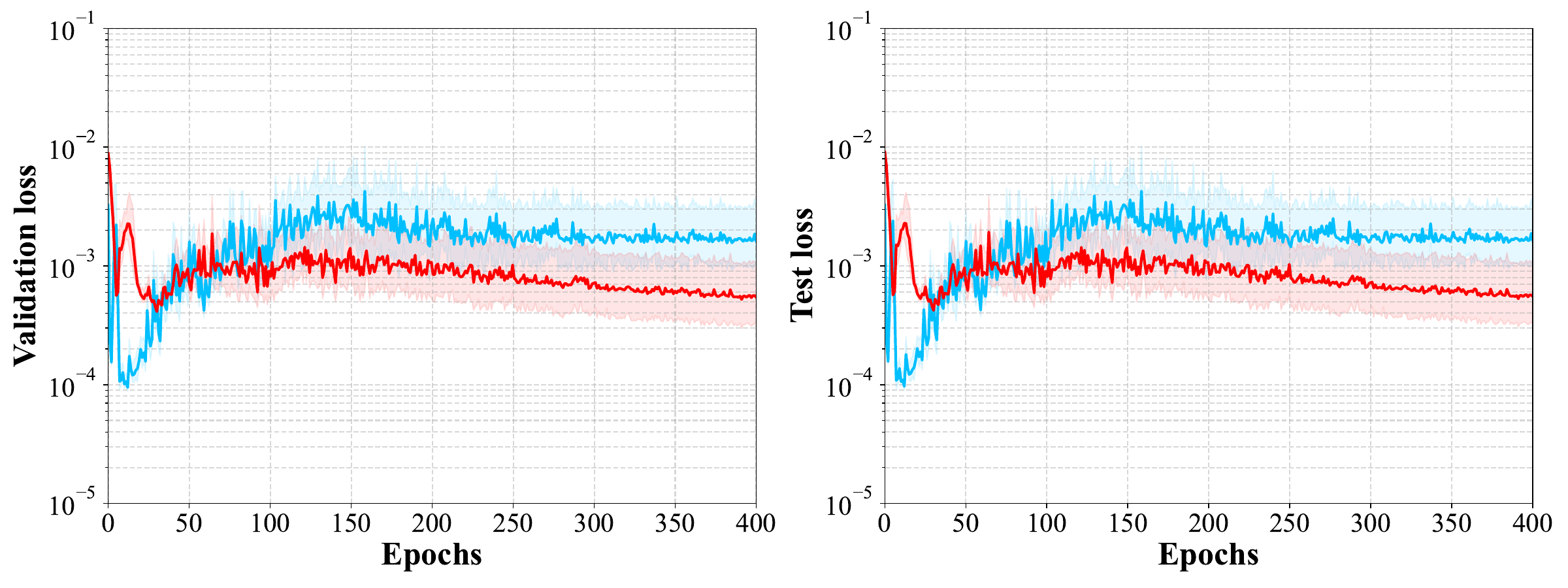}
    }
    \subfigure[GRN]{
    \includegraphics[width=0.95\columnwidth]{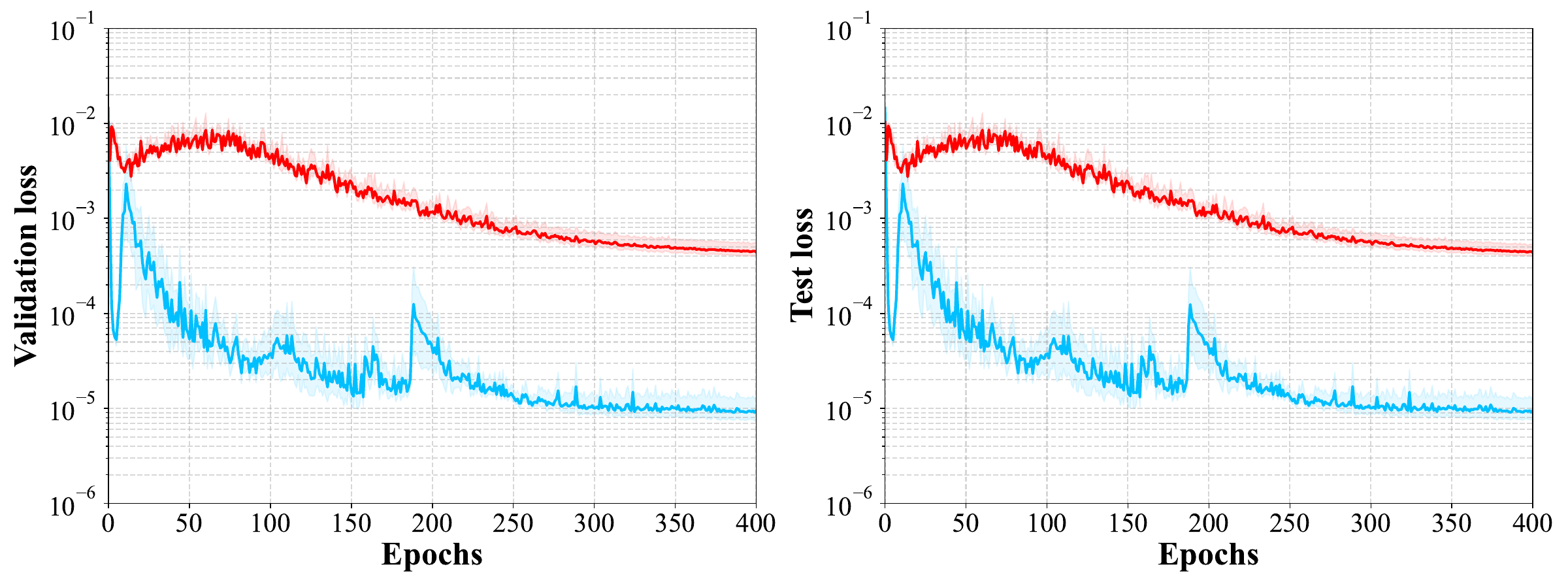}
    }
    \subfigure[Chemical process]{
    \includegraphics[width=0.95\columnwidth]{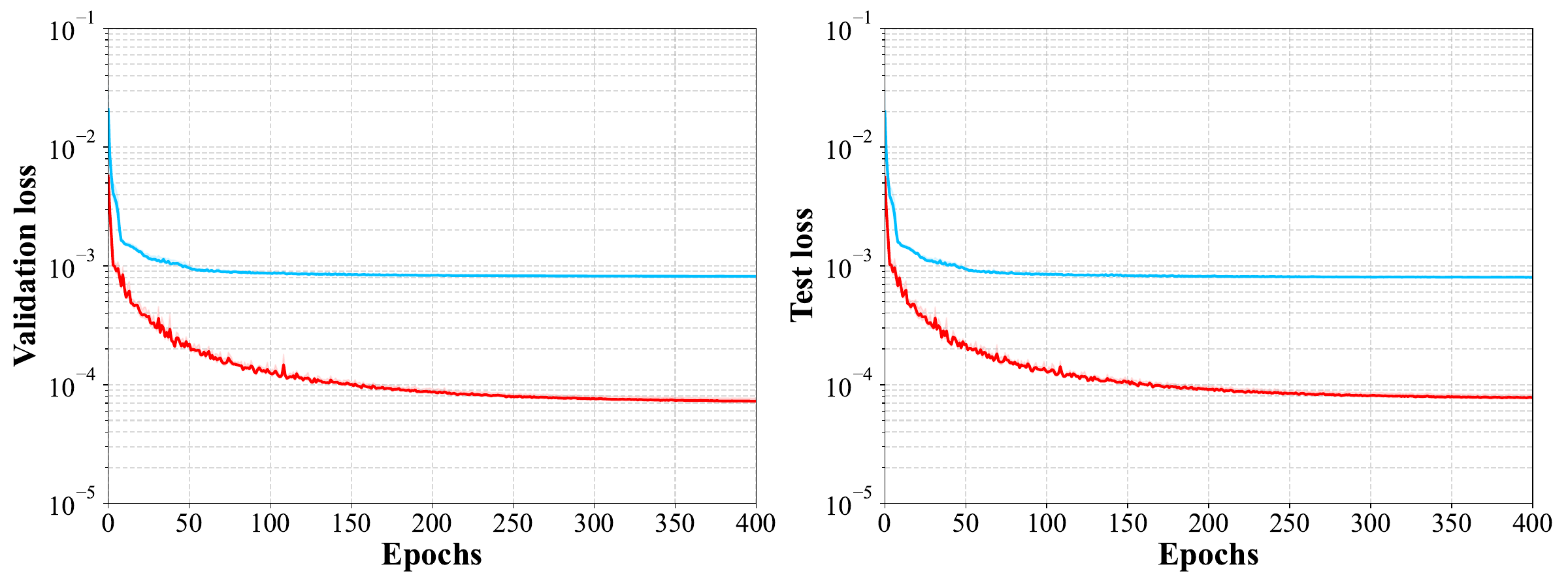}
    }
    \caption{Cumulative prediction errors calculated on both the validation and test datasets. The Y-axis represents the cumulative mean-squared prediction error in a logarithmic scale over 16 time steps, while the X-axis denotes the training epochs. The shaded area illustrates the confidence interval, corresponding to one standard deviation, calculated over 5 random initializations.}
\vspace{-0.2em}
    \label{fig:modeling-validation_loss}
\end{figure}

\subsubsection{Reactor-separator chemical process}

Finally, we apply MAKO to a chemical process that involves two continuously stirred tank reactors and a flash tank separator {\cite{liu2008two}}. A detailed description of this process can be found in {\cite{liu2008two}}. 
The state vector encompasses $[X_{A1}, X_{B1},T_{1},X_{A2},\allowbreak X_{B2},T_{2}, X_{A3}, X_{B3},T_{3}]^\text{T}$, including the mass fractions of $A$ and $B$ denoted by $X_{Ai}$ and $X_{Bi}$, and the temperatures $T_i$, $1,2,3$, across the three vessels. The control objective is to maintain the concentrations of $A$ and $B$ at a steady-state level $x_s= \big[0.18, 0.67, 480.3~\text{K}, 0.19, 0.65, 472.8~\text{K}, 0.06, 0.67, 474.9~\text{K}\big]^\text{T}$. The heating inputs are constrained within $\left[0, 0 , 0\right]^{\text{T}} \times10^6~\text{kJ}/\text{h}\leq u\leq [4.87, 1.68, 4.87]^{\text{T}}\times10^6~\text{kJ}/\text{h}$. Initially, the state is distributed uniformly within the region of $[0.8 x_s, 1.2 x_s]$.
The temperature of the feed stream to reactors 1 and 2 is assumed to be uncertain, $T_{10} \in [150K, 450K] $ and $T_{20}\in [150K, 450K]$. Their nominal values are $T_{10} =300 K $ and $T_{20} =300 K$. The value of $\alpha$ is 1.98, and the weighting matrices are $Q=\text{diag}([1, 1, 0, 1, 1, 0, 1, 1, 0])$, $R = \text{diag}([1, 1, 1])\times10^{-3}$.

{The MAKO model is trained on a multi-modal dataset covering different task settings with randomly sampled inputs. The collected data is normalized by using the mean and standard deviation vectors. In our evaluation, the uncertain parameters are first uniformly sampled from the parameter space, and then the inputs are uniformly sampled from the input space at each time instant to excite the system.}

\subsection{Baseline for comparison}

We compare the performance of MAKO with a competitive baseline, the deep stochastic Koopman operator (DeSKO) method \cite{han2023robust}. DeSKO can provide good modeling and control performance in various systems and has been shown to be robust to system uncertainties. DeSKO models are trained on datasets collected using nominal parameter settings. MAKO models, on the other hand, are trained on a meta-dataset comprising sub-datasets collected using randomly sampled parameter settings. The hyperparameters of MAKO are shown in Table~\ref{table:hyperparameters} in the Appendix.



\subsection{Modeling}

In this part, we first evaluate the modeling performance of MAKO. For both MAKO and DeSKO, 5 models are randomly initialized and trained. Both models are trained for 400 epochs. During each epoch, the models are trained using the Adam optimizer with mini-batches of 128 data points until the entire training dataset for each model has been traversed. The $l_2$ norm of the prediction error on the validation and test datasets in each epoch is presented in Fig.~\ref{fig:modeling-validation_loss}. As observed from Fig.~\ref{fig:modeling-validation_loss}, MAKO demonstrates good modeling performance across different benchmark systems. The average prediction errors of MAKO models over a 16-step horizon are less than $10^{-2}$. The modeling performance of MAKO is consistent on both the validation set and the test set. {While both MAKO and DeSKO provide high modeling accuracy, DeSKO outperforms MAKO in the GRN system. This may be attributed to two factors. 1) The MAKO model is trained and evaluated on a multi-modal dataset containing diverse task settings, while DeSKO is trained specifically under the nominal parameter setting, which allows it to specialize in modeling the GRN dynamics under those specific conditions. 2) MAKO prioritizes adaptability and generalization, which may slightly compromise its predictive accuracy for a specific task.} Moreover, MAKO models surpass DeSKO models in both the Cartpole system and the chemical process. MAKO also exhibits consistent prediction accuracy and low variances across different parameter initializations.

\begin{figure}
        \includegraphics[width=0.32\textwidth]{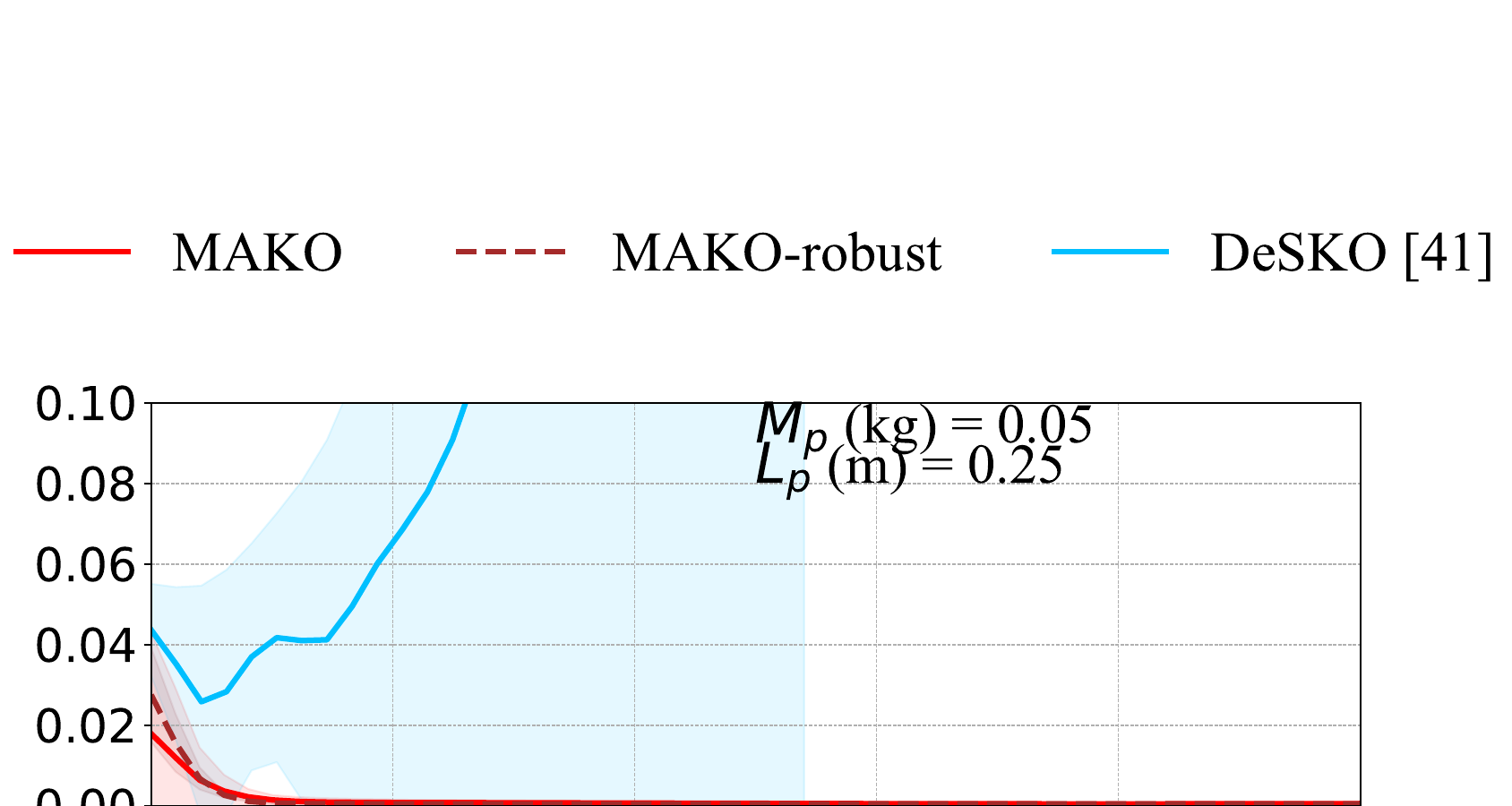}
    \centering
        \includegraphics[width=0.48\textwidth]{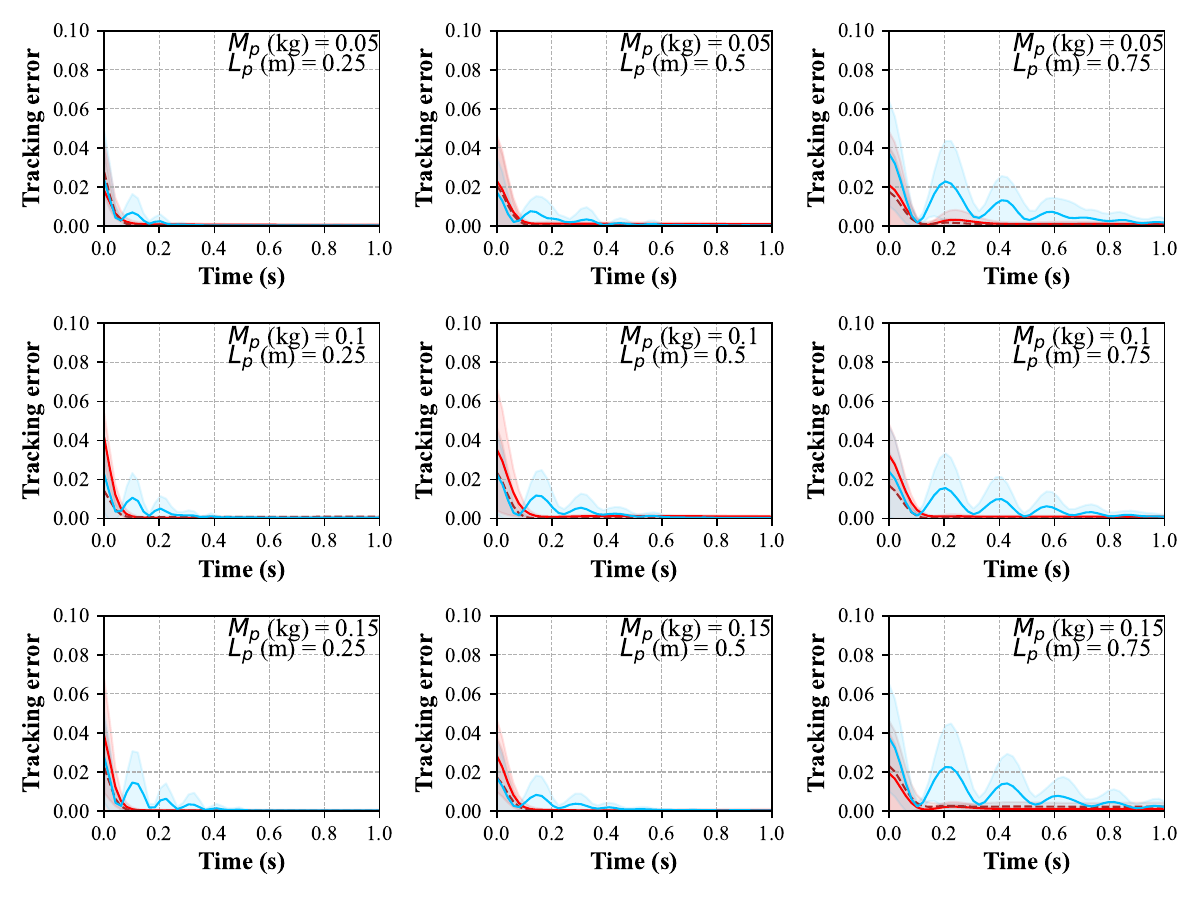}
    \caption{The trajectories of the tracking errors given by MAKO and DeSKO on Cartpole under 9 parameter settings. }
    \label{fig:control_eval_cartpole}
\end{figure}

\subsection{Control}

In this subsection, we examine the performance of the proposed MAKO-based controller. We evaluate both controllers, designed using nominal adaptation and robust adaptation, respectively, and refer to them as MAKO and MAKO-Robust. For each considered system, we uniformly take 9 sets of parameters from the respective parameter space described in Section~\ref{subsec:Simulation setup and baselines}. These parameter settings were not encountered by MAKO during its training phase, which can showcase the generalization ability of MAKO. On the other hand, the nominal parameter setting, which was encountered during DeSKO's training, is also included for comparison purposes. The cumulative tracking errors and the trajectories of the norm of tracking errors of MAKO and DeSKO for the three considered systems are shown in Fig.~\ref{fig:control_eval_cartpole}, Fig.~\ref{fig:control_eval_grn}, and Fig.~\ref{fig:control_eval_cstr}, respectively.
Fig.~\ref{fig:control_eval_cartpole} to Fig.~\ref{fig:control_eval_cstr} demonstrate that MAKO achieves good control performance in all three benchmark systems under various parameter settings. Its cumulative tracking error is also lower than or comparable to that of DeSKO. In the Cartpole example, DeSKO achieves stabilizing performance with higher cumulative costs compared to MAKO. For the second system, DeSKO accurately tracks the given reference in GRN under 3 sets of parameter settings. In the chemical process example, DeSKO fails to stabilize the tracking error in all parameter settings. Compared to the nominal MAKO-based MPC, the MPC leveraging the robust adaptation scheme exhibits faster and more stable transient behavior, while achieving comparable or smaller steady-state tracking errors.
In the current work, the simulation is conducted on a computer equipped with an i7-12700 2.10 GHz CPU. For the Cartpole system, the MAKO-robust framework achieved an average computation time of 0.0203 seconds per time step, which demonstrates its suitability for real-time control applications.

\begin{figure}
\includegraphics[width=0.32\textwidth]{legend3.pdf}
    \centering
    \includegraphics[width=0.48\textwidth]{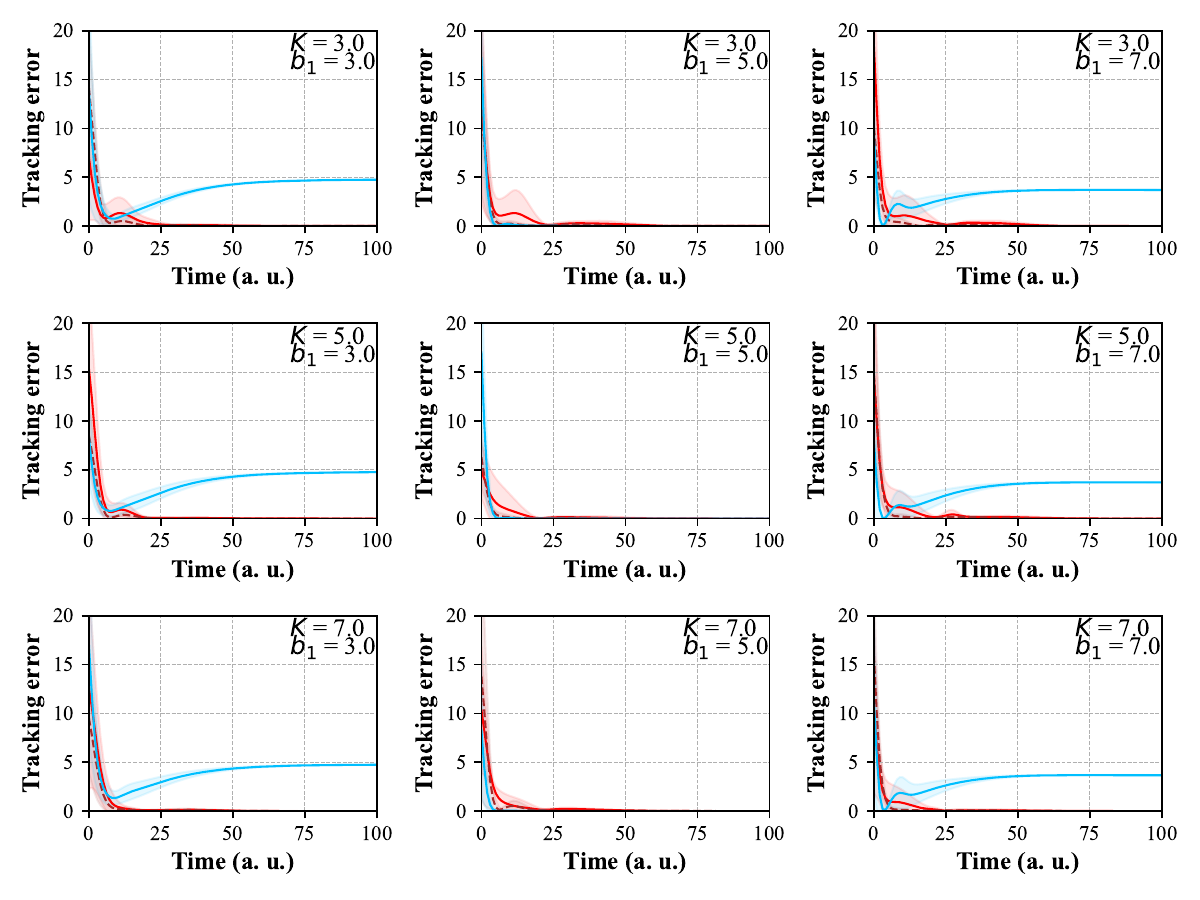}
    \caption{The trajectories of the tracking errors given by MAKO and DeSKO on GRN under 9 parameter settings. 
    }
    \label{fig:control_eval_grn}
\end{figure}

\begin{remark}
    While Assumption~\ref{assumption: 1} can be difficult to verify on the benchmark systems, MAKO provides good performance in all three case studies, demonstrated by high modeling accuracy and robust control performance despite the lack of strict forward invariance. The simulation results demonstrate the framework's ability to address real-world complexities and suggest its potential to handle a broader range of systems where theoretical guarantees on invariance may not strictly apply.
\end{remark}

\section{Conclusion}

In this paper, meta-learning and the Koopman operator were integrated for the first time to develop a multi-modal modeling approach for parametrically uncertain nonlinear systems.  An adaptation scheme was designed to refine the meta-trained Koopman operator with online data, and the convergence of parameter approximation and state prediction errors was proven under mild assumptions. Based on the proposed MAKO model, an AMPC scheme was proposed; this control scheme ensures the stability of the closed-loop system in the presence of previously seen and unseen parameter settings. Through extensive simulation evaluations, we demonstrated that the proposed MAKO modeling and control framework can outperform the baseline methods.

\begin{figure}
\includegraphics[width=0.32\textwidth]{legend3.pdf}
    \centering
    \includegraphics[width=0.48\textwidth]{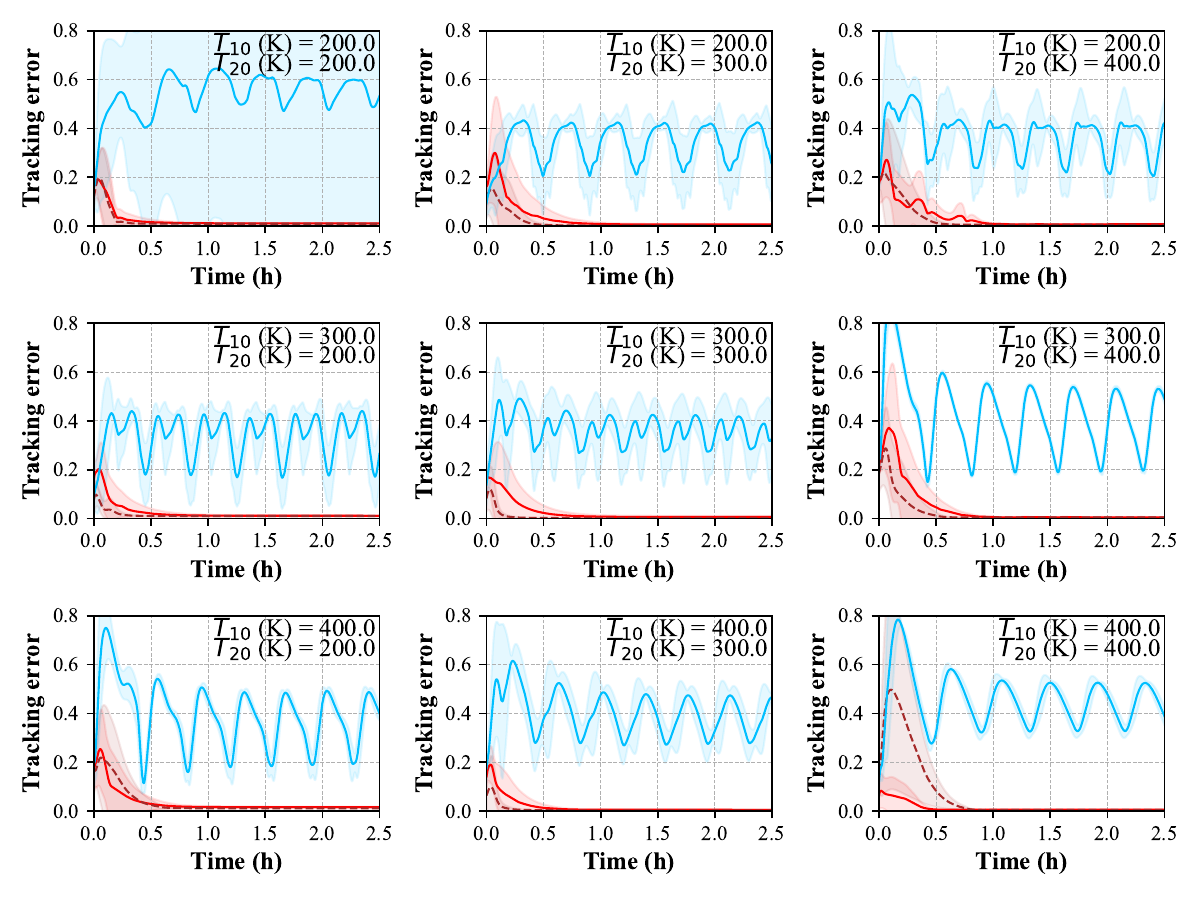}
    \caption{The trajectories of the tracking errors given by MAKO and DeSKO on the chemical process under 9 parameter settings. }
    \label{fig:control_eval_cstr}
\end{figure}

{We identify the following potential topics for future research: 1) While MAKO demonstrated good performance in simulations, applying this method to real-world systems with parametric uncertainties would be of interest in future research.  2) A formal analysis of persistent excitation (PE) requirements could be crucial for examining training convergence. Specifically, understanding how trajectory richness and PE influence the quality of the learned Koopman operator and its ability to generalize across unseen tasks would provide valuable theoretical insights. 3) A more systematic investigation of trajectory
length in relation to the meta-learning of Koopman operators is an important direction of future research. 4) Extending the proposed approach to higher-dimensional systems would be another interesting topic for future exploration.}



\begin{table}[htb]
\caption{Hyperparameters of MAKO}\label{table:hyperparameters}
\begin{tabular}{p{3.3cm}|c|c|c}
\hline
Hyperparameters&Cartpole& GRN & \makecell{Chemical \\ process} \\
\hline
Observable dimension & 128&128&256\\
\hline
Trajectory length & 250&400&500\\
\hline
Noise upper bound $\epsilon_w$ & $10^{-6}$&$10^{-6}$&$10^{-8}$\\
\hline
Noise upper bound $\epsilon_v$ & $10^{-4}$&$10^{-4}$&$10^{-8}$\\
\hline
Number of sub-datasets $N$ & $10$ &$10$&$20$\\
\hline
Size of sub-dataset $\mathcal{D}_i$&\multicolumn{3}{c}{$5 \times 10^4$} \\
\hline
Batch Size & \multicolumn{3}{c}{128}\\
\hline
Learning rate & \multicolumn{3}{c}{$10^{-4}$}\\
\hline
Prediction horizon $H$ & \multicolumn{3}{c}{16} \\
\hline
Structure of $\psi_\theta(\cdot)$ &\multicolumn{3}{c}{(128,128)}\\
\hline
Activation function & \multicolumn{3}{c}{ReLU}\\
\hline
$l_2$ norm regularization coefficient & \multicolumn{3}{c}{0.001}\\
\hline
\end{tabular}
\end{table}


\end{document}